\newtheorem{theorem}{Theorem}[section]
\newtheorem{claim}[theorem]{Claim}
\newtheorem{lemma}[theorem]{Lemma}
\newtheorem{corollary}[theorem]{Corollary}
\def\cF{{\cal F}}
\def\cP{{\cal P}}
\def\cR{{\cal R}}
\newcommand{\var}{\mbox{\rm var}}
\newcommand{\mean}{\mbox{\rm mean}}
\newcommand{\Conf}{\mbox{\tt \textcolor[gray]{0.3}{{\bf ALG}}}}
\newcommand{\CONF}{\mbox{\tt \textcolor[gray]{0.3}{{\bf ALG}}}}
\newcommand{\conf}{\mbox{\tt \textcolor[gray]{0.3}{{\bf ALG}}}}
\newcommand{\OPT}{\mbox{\tt \textcolor[gray]{0.3}{{\bf Opt}}}}
\newcommand{\opt}{\mbox{\tt \textcolor[gray]{0.3}{{\bf Opt}}}}
\date{}
\begin{document}

\title{Clock Synchronization and Distributed Estimation\\ in Highly Dynamic Networks: \\ An Information Theoretic Approach}




\author{
 Ofer Feinerman
\thanks{The Shlomo and Michla Tomarin Career Development Chair, The Weizmann Institute of Science, Rehovot, Israel. E-mail: {\tt  ofer.feinerman@weizmann.ac.il}.
Supported by the Clore Foundation, the Israel Science
Foundation (FIRST grant no. 1694/10) and the Minerva Foundation.}
 \and
 Amos Korman
\thanks{CNRS and University Paris Diderot, Paris, 75013, France.  E-mail: {\tt amos.korman@liafa.univ-paris-diderot.fr}. Supported in part by the ANR project DISPLEXITY. This work has received funding from the European Research Council (ERC) under the European Union's Horizon 2020 research and innovation programme (grant 
agreement No 648032).}
 }

\begin{titlepage}
\maketitle

\date{}

\def\thefootnote{\fnsymbol{footnote}}

\pagenumbering{arabic}
\thispagestyle{empty}

\begin{abstract}\parskip0.08cm

We consider the {\em External Clock Synchronization} problem in dynamic sensor networks. Initially, sensors obtain inaccurate estimations of an external time reference and subsequently collaborate in order to synchronize their internal clocks with the external time. 
For simplicity, we adopt the {\em drift-free} assumption, where internal clocks are assumed to tick at the same pace.  Hence, the problem is reduced to an estimation problem, in which the sensors need to estimate the initial external time. This work is further relevant to the problem of collective approximation of environmental values by biological groups. 

Unlike most works on clock synchronization that assume static networks, this paper focuses on an extreme case of highly dynamic networks.  
Specifically, we assume  a non-adaptive scheduler adversary that dictates in advance an arbitrary, yet {\em independent}, meeting pattern. Such meeting patterns fit, for example, with short-time scenarios in highly dynamic settings, 
where each sensor interacts with only few other arbitrary sensors.

We  propose an extremely simple clock synchronization algorithm that is based on weighted averages, and prove that its performance on any given independent meeting pattern is highly competitive with that of the best possible algorithm, which operates without any resource or computational  restrictions, and knows the meeting pattern in advance. In particular, when all distributions involved are Gaussian, the performances of our scheme coincide with the optimal performances. Our proofs rely on an extensive use of the concept of Fisher information. We use the Cram\'er-Rao bound and  our definition of a \emph{Fisher Channel Capacity} to quantify information flows and to obtain lower bounds on collective performance. This opens the door for further  rigorous quantifications of information flows within collaborative sensors.\\
\end{abstract}




\end{titlepage}



\section{Introduction}

\subsection{\large Background and Motivation}\parskip-0cm

 Representing and  communicating information is a main interest of theoretical distributed computing \cite{Peleg:book}. However, such studies often seem disjoint from what may be the largest body of work regarding coding and communication: Information theory \cite{Cover,Shannon}. Perhaps the main reason  for this stems from the fact that theoretical distributed computing studies are traditionally concerned with  noiseless models of communication, in which the content of a message that passes from one node to another is not distorted. 
%
%
This reliability in transmission relies on an implicit assumption that error-corrections is guaranteed by a lower level protocol that is responsible for implementing  communication. Indeed, when bandwidth is sufficiently large, one can encode a message with a large number of error-correcting bits in a way that makes communication noise practically a non-issue. \parskip0.1cm

In some distributed scenarios, however,   distortion in communication~is unavoidable. 
One  example concerns the classical problem of {\em clock synchronization}, which has attracted a lot of attention from both theoreticians in distributed computing~\cite{Attiya,Lynch,Lenzen2,Boaz2,Srikanth}, as well as  practitioner engineers \cite{Chaudhari2,Elson,Elson2,Fetzer,Jeske,Sichitiu,Solis}, see  \cite{survey, survey2,Lenzen3,Qasim} for comprehensive surveys. In this problem,   processors need to synchronize their internal clocks  (either among themselves only or with respect to a global time reference) relying on relative time  measurements between clocks. Due to unavoidable  unknown delays in communication, such measurements are inherently noisy. Furthermore, since the source of the noise is  the delays, error-correction does not seem to be of any use for reducing the noise.  
 The situation becomes even more complex when processors are mobile, preventing them from  reducing errors by averaging repeated measurements to the same processors, and from contacting  reliable processors. Indeed, the clock synchronization problem is particularly challenging in the  context of wireless sensor networks and ad hoc networks which are  typically formed by autonomous, and often mobile, sensors without central control. 


Distributed computing models which include noisy communication call for a rigorous comprehensive study that employs information theoretical tools.
Indeed, a recent trend in the engineering community is to 
view the clock synchronization problem from a signal processing point of view, and adopt tools from information theory (e.g., the Cram\'er-Rao bound) to bound the affect/impact of inherent noise  \cite{Chaudhari,Chaudhari2,Jeske,Leng}, see \cite{Qasim} for a survey. However, this perspective has hardly received any attention by theoreticians in distributed computing that mostly focused on worst case message delays \cite{Attiya,Lynch,Lenzen2,Welch}, which do not seem to be  suitable for information theoretic considerations. In fact, very few works on clock synchronisation consider a system with random  delays and analyse it following a rigorous theoretical distributed algorithmic type of analysis. An exception to that is the work of Lenzen el al.  \cite{Lenzen},  but also that work does not involve information theory. In this current paper, 
we study  the clock synchronization problem through the purely theoretical distributed algorithmic perspective while adopting the signal processing and information theoretic point of view. In particular, we adopt tools from Fisher Information theory \cite{Stam1959,Zamir1998}.

We consider the  {\em external} version of the  problem \cite{Cristian,Elson2,Fetzer,Mills,Boaz2,survey2} in which processors (referred to as sensors hereafter) collaborate in order to synchronize their  clocks with an external {\em global clock}. Informally, sensors initially obtain inaccurate estimates\footnote{Traditional protocols like NTP \cite{NPT} and TEMPO \cite{Tempo}
use an external standard like GPS (Global Positioning System) or UTC (Universal~Time) to synchronize 
networks. However, the use of of such systems poses a high demand for energy which is usually
undesired in  sensor networks. Hence, works in sensor networks typically assume that one {\em source} processor obtains  an accurate~estimate of the global time reference and essentially governs the synchronization of the rest of the  sensors \cite{Boaz2}. Here, we generalize this framework by assuming that each processor may  initially have a different estimate quality of the global time reference, and our goal is to investigate what can be achieved given the qualities of initial estimations.}
of a global (external) time $\tau^*\in \mathbb{R}$ reference,
and subsequently collaborate to align their internal clocks to be as close as possible to the external clock. To this end, sensors  communicate through uni-directional pairwise interactions that include inherently {\em noisy measurements} of the relative deviation between their internal clocks and, possibly, some complementary information. 
To~focus~on the problems occurred by the initial inaccurate estimations of $\tau^*$ and the noise in the communication~we~restrict~our attention to {\em drift-free} settings \cite{Attiya,Lynch}, in which all clocks tick at the same rate. This setting essentially reduces~the~problem to the problem of estimating $\tau^*$. See, e.g.,~\cite{Gubner,Xiao,Detection} for works on estimation in the engineering community. 

With very few exceptions that effectively deal with dynamic settings  \cite{Dolev,kuhn}, almost all works on clock synchronization (and estimation) considered static networks. 
Indeed, the construction of efficient clock synchronization algorithms for dynamic networks is considered as a very important and challenging task\footnote{For example, dynamic meeting patterns prevent the use of classical external clock synchronization algorithms (e.g., \cite{NPT,Boaz2})  that are based on one or few  {\em source} sensors that obtain accurate estimation of the global time and govern the synchronization of other sensors.} \cite{survey, survey2}.
This paper addresses this challenge by considering highly dynamic networks in which sensors have little or no control on who they interact with. Specifically, we assume a non-adaptive scheduler adversary that dictates in advance a meeting-pattern for the sensors. However, the adversary we assume is not unlimited. Specifically, for simplicity, in this initial work we restrict the adversary to provide {\em independent-meeting patterns} only, in which it is guaranteed that whenever a sensor views another sensor, their transitive histories are disjoint\footnote{Another informal way to view such patterns is that they guarantee that, given the global time, whenever a sensor views another sensor, their local clocks are independent; see Section \ref{sec:model} for a formal definition.}. 
Although they are not very good representatives of communication in static networks, 
independent meeting patterns 
fit  well with highly stochastic communication patterns during short-time scales, in which each sensor observes only few other arbitrary sensors (see more discussion in Section~\ref{sec:model}). 
Given such an adversarial meeting-pattern, we are concerned with minimizing the deviation of each internal clock from the global time.

 

As our objective is to model small and simple sensors, we are interested in algorithms that employ elementary internal computations and
economic use of communication.
We use competitive analysis to evaluate the performances of algorithms, comparing them to the best possible  algorithm that knows the whole meeting pattern in advance and operates under the most liberal version of the model that allows for unrestricted resources in terms of memory and communication capacities, and individual computational ability.


\bigskip

\subsection{\large Our contribution}\label{results}
\paragraph{Lower bounds on optimal performance.} 
We first consider algorithm $\OPT$, the best possible algorithm operating on the given independent meeting pattern. We note that specifying $\OPT$ seems challenging, especially since we do not assume a prior distribution on the starting global time, and hence the use of Bayesian statistics seems difficult. Fortunately, for our purposes, we are merely interested in lower bounding the performances on that algorithm. We achieved  that by 
 relating the smallest possible variance of a sensor at a given time  to the largest possible  \emph{Fisher Information (FI)} of the sensor at that time. \parskip0.1cm
%
%
%
 This measure 
  quantifies the sensor's current  knowledge regarding the relative deviation between its local time and the global time.
  We provide a recursive formula to calculate~$J_a$, the FI at  sensor $a$, for any sensor $a$. Specifically, initially, the FI at a sensor is the Fisher information in the distribution family governing its initial deviation from the global time (see Section~\ref{pre:fisher} for the formal definitions). When sensor~$a$ observes sensor $b$, the FI at $a$ after this observation (denoted by $J'_a$) satisfies: 
  \begin{equation}\label{eq:fish}
  J'_a\leq J_a + \frac{1}{\frac{1}{J_b} + \frac{1}{J_{N}}},
   \end{equation}
  where $J_{N}$ is the Fisher Information in the noise distribution related to the observation.  To obtain this formula we  prove   
a  generalized version of the {\em Fisher information inequality} \cite{Stam1959,Zamir1998}. 
Relying on the {\em Cram\'er-Rao bound} \cite{Cover}, this formula is then used  to bound the corresponding variance under algorithm~$\OPT$.
Specifically, the variance of the internal clock of sensor $a$ is at least ${1}/{J_a}$.

Equation \ref{eq:fish} provides immediate bounds on the convergence time. Specifically,  
the inequality  sets a bound~of~$J_{N}$ for the increase in the  {\em FI} per interaction.
In analogy to Channel Capacity as defined by Shannon \cite{Cover} we term this upper bound  as the {\em Fisher Channel Capacity}.
Given small $\epsilon>0$, we define the convergence time $T(\epsilon)$  as the minimal number of observations required by the typical sensor   until  its variance drops below  $\epsilon^2$  (see Section~\ref{pre:fisher} for the formal definition).  Let $J_0$ denote the median initial Fisher Information of sensors.
Based on the Fisher Channel Capacity we  prove the following.
\begin{theorem}\label{eq:time} Assume that $J_0\ll 1/\epsilon^2$ for some small $\epsilon>0$, then
$T(\epsilon)~\geq ({\frac{1}{\epsilon^2}-J_0})/{J_{N}}\approx {1}/{\epsilon^2 J_{N}}.$
\end{theorem}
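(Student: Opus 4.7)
The plan is a direct combination of the per-interaction Fisher Information bound stated in Equation~\ref{eq:fish} and the Cram\'er-Rao bound invoked just before the theorem.

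\textbf{Step 1 (each interaction adds at most $J_N$).} For any $J_b>0$ one computes
$\frac{1}{1/J_b + 1/J_N} = \frac{J_b J_N}{J_b + J_N} < J_N$,
so Equation~\ref{eq:fish} yields $J'_a \le J_a + J_N$ regardless of whom $a$ observes --- this is exactly the content of calling $J_N$ the Fisher Channel Capacity. Iterating along the observation history of a fixed sensor, after $k$ observations its Fisher Information satisfies $J_a \le j_0 + k J_N$, where $j_0$ is its initial FI.

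\textbf{Step 2 (Cram\'er-Rao translation and median argument).} For the variance at a sensor to drop below $\epsilon^2$, the Cram\'er-Rao bound forces its Fisher Information to be at least $1/\epsilon^2$. Combined with Step~1, any sensor whose initial FI is at most $J_0$ must have made at least $k \ge (1/\epsilon^2 - J_0)/J_N$ observations before this is feasible. Since $J_0$ is defined as the median initial FI, at least half of the sensors start with initial FI $\le J_0$ and therefore inherit this lower bound, so the ``typical'' (median) convergence time satisfies $T(\epsilon) \ge (1/\epsilon^2 - J_0)/J_N$, which under the hypothesis $J_0 \ll 1/\epsilon^2$ is approximately $1/(\epsilon^2 J_N)$.

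\textbf{Main obstacle.} The only point requiring care is the iteration in Step~1: Equation~\ref{eq:fish} is a single-interaction statement, so one must apply it inductively along a sensor's observation history and check that the independent-meeting assumption is precisely what allows the bound to compose --- i.e.~that the $J_b$ term appearing at each interaction is the genuine FI currently carried by the partner, with no double-counting through overlaps in past histories. Once this is granted, the rest of the argument is an elementary arithmetic inequality.
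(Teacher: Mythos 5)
Your overall strategy matches the paper's exactly: use the Fisher Channel Capacity bound (each observation raises the FI by at most $J_N$), translate $\var < \epsilon^2$ into $J \ge 1/\epsilon^2$ via Cram\'er-Rao, and then argue about the median sensor. Step~1 is fine; the iteration is unproblematic precisely because, as you note, the per-interaction increment is bounded by $J_N$ independently of $J_b$, so no double-counting can occur.

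The gap is in Step~2's median argument. You assert that ``at least half of the sensors start with initial FI $\le J_0$ and therefore inherit this lower bound,'' passing from ``these sensors would need $\ge k$ observations to cross the threshold'' to ``$T(\epsilon) \ge k$.'' But $T(\epsilon)$ is defined in the paper as the median of $R(a)$, where $R(a)$ counts the observations $a$ \emph{actually made} by round $\rho$, not the observations it would need to make. A sensor that starts with low FI but is rarely (or never) scheduled as an observer has small $R(a)$ yet still satisfies your premise, so it does not ``inherit'' anything. The paper's proof handles this by a pigeon-hole step you omit: at round $\rho$ \emph{more than half} the population already has $J_a(\rho) \ge 1/\epsilon^2$ (by Cram\'er-Rao), \emph{more than half} started with $J_a(0) \le J_0$ (by definition of the median), so some sensor $a^*$ is in both halves, and for that concrete sensor $R(a^*) \ge (1/\epsilon^2 - J_0)/J_N$. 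That intersection argument is the mechanism that produces a sensor whose \emph{actual} observation count is bounded below; your version needs it. (The paper itself labels the remaining passage from that single sensor's count to the median $T(\epsilon)$ as ``intuition,'' so the final step is heuristic in both treatments, but the pigeon-hole step is what the paper relies on and should appear in your proof.)
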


\paragraph{A highly competitive elementary algorithm.}
We  propose a simple clock synchronization algorithm
and prove~that its performance on any given independent meeting pattern  is highly competitive with that of the optimal one. That is, estimations of global time at each sensor remain unbiased throughout the execution and the variance at any given time is $\Delta_0$-competitive with the best possible variance, where $\Delta_0$ is initial Fisher-tightness (see definition in Section \ref{pre:fisher}. In contrast to the optimal algorithm that may be based on transmitting complex functions in each interaction, and on performing complex internal computations, our simple algorithm is based on far more basic rules. First,  transmission is  restricted to a single {\em accuracy} parameter. Second,  using the noisy measurement of deviation from the observed sensor, and the accuracy  of that sensor, the observing sensor updates its internal clock and accuracy parameter by  careful, yet elementary, weighted-averaging~procedures.


Our weighted-average algorithm  is designed to maximize the flow of Fisher information in interactions. This is proved by showing that the accuracy  parameter is, at all times, both representative of the reciprocal of the sensor's variance  and close to the Fisher Information upper bound. 
In short, we proved the following.
\begin{theorem}\label{thm:main}
There exists a simple weighted-average based clock synchronization algorithm which  is $\Delta_0$-competitive (at any sensor and at any time).
\end{theorem}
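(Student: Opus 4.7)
The plan is to exhibit an explicit algorithm in which every sensor $a$ carries, beside its clock estimate $\hat{\tau}_a$ of $\tau^*$, a single scalar ``accuracy'' $\alpha_a$, initialised to the reciprocal of its initial variance. When sensor $a$ observes sensor $b$, it obtains $b$'s estimate $\hat{\tau}_b$ corrupted by independent noise $N$ together with the scalar $\alpha_b$, and performs the weighted-average update
\begin{equation*}
\hat{\tau}_a' \;=\; w_a\,\hat{\tau}_a + w_b\,(\hat{\tau}_b + N), \qquad \alpha_a' \;=\; \alpha_a + \beta,
\end{equation*}
with $\beta = \bigl(\tfrac{1}{\alpha_b}+\tfrac{1}{J_N}\bigr)^{-1}$, $w_b = \beta/(\alpha_a+\beta)$ and $w_a=1-w_b$. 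Only the scalar $\alpha_b$ is transmitted and the per-interaction work is a constant number of arithmetic operations, so the algorithm is as elementary as the statement requires.

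I would then establish three inductive invariants over the meeting pattern. First, \emph{unbiasedness}: since the update is a convex combination ($w_a+w_b=1$) of quantities that are unbiased estimators of $\tau^*$ (by induction for $\hat{\tau}_a$ and $\hat{\tau}_b$, and because $\mathbb{E}[N]=0$), $\hat{\tau}_a'$ is unbiased. Second, \emph{variance tracking}: $\alpha_a = 1/\mathrm{Var}(\hat{\tau}_a)$. The independent-meeting-pattern hypothesis guarantees that at the moment of an interaction the variables $\hat{\tau}_a$, $\hat{\tau}_b$, $N$ are mutually independent, so one may expand $\mathrm{Var}(\hat{\tau}_a') = w_a^2/\alpha_a + w_b^2(1/\alpha_b+1/J_N)$; the weights above are precisely those minimising this quadratic and produce the value $1/(\alpha_a+\beta)=1/\alpha_a'$.

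Third, and the real content of the theorem, \emph{competitiveness}: $\alpha_a \geq J_a/\Delta_0$ at all times, where $J_a$ is the Fisher information upper bound of~\eqref{eq:fish}. The base case is the definition of initial Fisher-tightness $\Delta_0$. For the inductive step, from $\alpha_b\geq J_b/\Delta_0$ one has $1/\alpha_b \leq \Delta_0/J_b$, and then $\Delta_0\geq 1$ (which follows from Cram\'er--Rao) gives
\begin{equation*}
\beta \;=\; \frac{1}{1/\alpha_b+1/J_N} \;\geq\; \frac{1}{\Delta_0(1/J_b+1/J_N)} \;=\; \frac{1}{\Delta_0}\cdot\frac{1}{1/J_b+1/J_N}.
\end{equation*}
Combined with $\alpha_a\geq J_a/\Delta_0$ and~\eqref{eq:fish}, this yields $\alpha_a'\geq J_a'/\Delta_0$. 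Finally, Cram\'er--Rao applied to $\OPT$ gives $\mathrm{Var}_{\OPT}(\hat{\tau}_a)\geq 1/J_a$, so $\mathrm{Var}_{\Conf}(\hat{\tau}_a)=1/\alpha_a\leq\Delta_0/J_a\leq\Delta_0\cdot\mathrm{Var}_{\OPT}(\hat{\tau}_a)$, which is the desired $\Delta_0$-competitiveness.

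The main obstacle is the systematic use of independence. The variance identity in the second invariant collapses the instant $\hat{\tau}_a$ and $\hat{\tau}_b$ share correlated history, and the third invariant inherits that assumption through it; hence the whole argument rests on the independent-meeting-pattern hypothesis, and the delicate part of the formal proof is bookkeeping the probability space to check that every pair involved in an interaction is genuinely independent. The algebra in the invariants themselves is then routine, and in the Gaussian case $\Delta_0=1$ so the bound becomes tight, recovering the claim that our performance coincides with the optimal.
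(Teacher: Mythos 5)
Your proof has the same overall architecture as the paper's: exhibit a weighted-average algorithm storing a single accuracy scalar, then establish three inductive invariants (unbiasedness, variance tracking, and the Fisher-information lower bound on the accuracy), and finally close via Cram\'er--Rao applied to $\OPT$. The paper proves exactly these three claims as Lemma~\ref{lem:reciprocal} (parts 1 and 2) and Lemma~\ref{lem:c_a(t)}, and chains them in the same way you do. So the blueprint is right.

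However, there is a genuine bug in the algorithm you define, and it propagates through the argument. You set
\[
\beta \;=\; \Bigl(\tfrac{1}{\alpha_b}+\tfrac{1}{J_N}\Bigr)^{-1},
\]
whereas the paper's reduced accuracy is $\hat c_b = \bigl(1/c_b + \var(N)\bigr)^{-1}$, i.e.\ it uses the \emph{variance} of the noise, not the reciprocal of its Fisher information. These agree only when the noise achieves the Cram\'er--Rao bound (e.g.\ Gaussian); in general $\var(N) \geq 1/J_N$ and the inequality may be strict. The consequence is that your second invariant fails: with the genuinely independent variables $\hat\tau_a$, $\hat\tau_b$, $N$, the correct expansion is
\[
\var(\hat\tau_a') = w_a^2 \var(\hat\tau_a) + w_b^2\bigl(\var(\hat\tau_b) + \var(N)\bigr) = \frac{w_a^2}{\alpha_a} + w_b^2\Bigl(\tfrac{1}{\alpha_b} + \var(N)\Bigr),
\]
and with your choice of $\beta$ and the weights $w_b = \beta/(\alpha_a+\beta)$ one gets
\[
\var(\hat\tau_a') = \frac{1}{\alpha_a+\beta} + \frac{\beta^2\bigl(\var(N) - 1/J_N\bigr)}{(\alpha_a+\beta)^2} \;\geq\; \frac{1}{\alpha_a'},
\]
with equality only for Gaussian noise. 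So your $\alpha_a$ systematically \emph{over}-reports accuracy, and the step $\var_{\Conf}(\hat\tau_a)=1/\alpha_a$ in your competitiveness chain is false whenever $N$ is not Gaussian, which is exactly the regime where $\Delta_0 > 1$ and the theorem has content.

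The fix is precisely the paper's choice: define $\beta = \bigl(1/\alpha_b + \var(N)\bigr)^{-1}$ so that the weights truly minimize the variance and the second invariant holds with equality. In the inductive step of the third invariant you then cannot lean only on $\Delta_0 \geq 1$; instead you use the definition of $\Delta_0$ to bound the noise term as well, namely $\var(N) \leq \Delta_0/J_N$ (the Fisher-tightness of $N$ is at most $\Delta_0$), giving
\[
\frac{1}{\alpha_b} + \var(N) \;\leq\; \frac{\Delta_0}{J_b} + \frac{\Delta_0}{J_N},
\qquad\text{hence}\qquad
\beta \;\geq\; \frac{1}{\Delta_0}\cdot\frac{1}{1/J_b + 1/J_N},
\]
which is exactly Equation~\ref{eq:gaim} in the paper. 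With that correction, the rest of your argument --- unbiasedness via the convex combination, independence from the meeting pattern, and the final chaining through Cram\'er--Rao --- is the paper's proof.
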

Two important corollaries of Theorem \ref{thm:main} follow directly from the definition of the  initial Fisher-tightness $\Delta_0$.
\begin{corollary}
If the  number of distributions involved is a constant (independent of the number of sensors), then  our algorithm is $O(1)$-competitive (at any sensor and at any time).
\end{corollary}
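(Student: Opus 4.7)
The plan is to unpack the definition of initial Fisher-tightness $\Delta_0$ and observe that it is a bounded function of the collection of distributions that govern the initial deviations of the sensors from the global clock. Fisher-tightness of a single distribution is a quantity measuring the ratio between its variance and the reciprocal of its Fisher information; by the Cram\'er-Rao bound this ratio is always at least $1$, and for any fixed distribution with finite nonzero variance and finite nonzero Fisher information it is a finite constant $\delta$ depending only on the distribution (in particular, $\delta = 1$ for Gaussians). The overall quantity $\Delta_0$ is obtained as a maximum (or similar aggregation) over the family of distributions appearing in the initial setup.

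The corollary is then essentially immediate. First, I would recall from Section~\ref{pre:fisher} the precise definition of $\Delta_0$ as an aggregation of per-distribution Fisher-tightness values over the (at most $k$) distinct distributions $D_1,\dots,D_k$ governing the initial deviations of the sensors. Second, I would note that each $\delta_i := \mathrm{Var}(D_i)\cdot J(D_i)$ is a fixed finite constant, since the distributions are assumed to have finite nonzero variance and finite nonzero Fisher information. Third, since $k$ is a constant independent of the number of sensors, the aggregated quantity satisfies $\Delta_0 \leq \max_{i\leq k}\delta_i = O(1)$. Finally, applying Theorem~\ref{thm:main} yields that at any sensor and at any time the ratio between the variance achieved by our algorithm and the variance of the optimal algorithm is bounded by $\Delta_0 = O(1)$.

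There is no substantial obstacle here; the whole content of the corollary lies in the observation that ``constantly many distributions'' translates into ``constant Fisher-tightness'', after which Theorem~\ref{thm:main} does all the work. The only mildly delicate point worth checking is that the aggregation used in the definition of $\Delta_0$ really is a bounded function of the individual $\delta_i$ values (rather than, say, a sum weighted by the number of sensors); this is guaranteed by the way $\Delta_0$ is introduced as an intrinsic property of the family of initial distributions, independent of how many sensors are drawn from each distribution.
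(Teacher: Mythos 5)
Your proposal is correct and follows essentially the same route as the paper: by definition $\Delta_0 = \max\{\Delta(\Phi)\mid\Phi\in\cF\}$ where each $\Delta(\Phi)=\var(\Phi)\cdot J^\tau_\Phi$ is a finite constant, so a constant-size $\cF$ gives constant $\Delta_0$, and Theorem~\ref{thm:main} does the rest. The one point you flagged as needing verification — that the aggregation is a maximum rather than a sensor-count-weighted quantity — is confirmed by the definition in Section~\ref{pre:fisher}.
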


\begin{corollary}
If all  distributions involved are  Gaussians, then the performances of our algorithm (in terms of the variances) coincide with the optimal one, for each sensor and at any time.
\end{corollary}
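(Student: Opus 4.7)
The plan is to deduce the corollary as a direct consequence of Theorem \ref{thm:main}, by verifying that the initial Fisher-tightness $\Delta_0$ equals $1$ whenever every distribution involved is Gaussian. Recall that Theorem \ref{thm:main} guarantees $\Delta_0$-competitiveness: the variance achieved by the weighted-average algorithm at every sensor and at every time is within a multiplicative factor of $\Delta_0$ of that achieved by $\OPT$. Since the Cram\'er--Rao based lower bound on $\OPT$'s variance implies that no algorithm can improve upon $\OPT$, it suffices to show that $\Delta_0 = 1$ in the all-Gaussian setting; the two performances must then coincide at every sensor and at every time.

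To establish $\Delta_0 = 1$, I would invoke the classical fact that for a location family $f(\cdot - \theta)$ with $f = \mathcal{N}(0,\sigma^2)$, the Fisher information with respect to the unknown location $\theta$ is exactly $1/\sigma^2$. Hence the product $J \cdot \mathrm{Var}$, which measures how close a distribution comes to saturating the Cram\'er--Rao inequality and which defines its Fisher-tightness, equals $1$ for every Gaussian. Taking the maximum of this quantity over the finite collection of initial distributions therefore yields $\Delta_0 = 1$ in the all-Gaussian case.

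The main point to verify -- which I expect to be routine rather than a genuine obstacle -- is that no inequality used in the proof of Theorem \ref{thm:main} is loose in the Gaussian case. This is automatic, because each tool on which the proof relies is tight for Gaussian location families: the Cram\'er--Rao bound is achieved with equality by the sample mean, and the generalized Fisher information inequality of Equation \ref{eq:fish} becomes an equality precisely when the underlying distributions are Gaussian (reflecting the fact that the variances of independent Gaussians add, i.e.\ their reciprocal Fisher informations add). Consequently, both the upper bound on the algorithm's variance and the lower bound on $\OPT$'s variance are saturated simultaneously. The upshot is that the accuracy parameter maintained by our scheme equals the true reciprocal of the sensor's variance at every step, and this variance matches $\OPT$'s exactly, yielding the claimed coincidence.
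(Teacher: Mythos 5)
Your proposal is correct and matches the paper's own (very short) argument: the corollary is read off from Theorem \ref{thm:main} together with the observation in Section \ref{pre:fisher} that $\Delta_0=1$ when all distributions in $\cF$ (including the noise) are Gaussian, since then $\var(X_a(t,\CONF))\leq \var(X_a(t,\OPT))$ while optimality of $\OPT$ gives the reverse inequality. Your third paragraph about checking tightness of the intermediate inequalities is unnecessary --- the two-sided sandwich already forces equality --- but it does no harm.
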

 

We note that our algorithm does not require the use of sensor identities and can thus be also employed in {\em anonymous} networks \cite{Ang,Breathe}, yielding the same performances.

\bigskip

\section{\large Preliminaries}\label{preliminaries}\label{subset:fishfish}\parskip0.1cm

\subsection{The Model}\label{sec:model} \parskip0.1cm
We consider a collection of $n$ 
sensors that collaborate in order to synchronize their internal clocks with an external global clock reference. 
We consider a set $\cF$ of  sufficiently smooth (see definition in Section~\ref{pre:fisher}),  probability density distributions ($pdf$) centered at zero. One specific distribution among the $pdf$s in~$\cF$ is the {\em noise} distribution, referred to as $N(\eta)$. Each sensor $a$ is associated with a distribution $\Phi_a(x)\in \cF$ which governs the initialization deviation of its internal clock from the global time as described in the next paragraph. Depending on the specific model, we assume that sensor $a$ knows various properties of~$\Phi_a$. In the most restricted version we consider, sensor~$a$ knows only the variance of $\Phi_a$ and in the most liberal version, $a$  knows the full description of $\Phi_a$. Execution is initiated when the global time is some $\tau^*\in \mathbb{R}$, chosen by an adversary.
  
Two important cases are (1) when $\cF$ contains a constant number of distributions (independent of the number of sensors) and (2) when all distributions in $F$ are Gaussian. Both cases serve as reasonable assumptions for realistic scenarios. For the former case we shall show asymptotically optimal performances and for the latter case we will show strict optimal (non-asymptotical) performance.
 
\paragraph{Local clocks.}
Each sensor~$a$ is initialized with a local clock  $\ell_a(0)\in \mathbb{R}$,  randomly chosen according to  $\Phi_a(x-\tau^*)$, independently of all other sensors. That is, as $\Phi_a(x)$ is centred around zero, the initial local time $\ell_a(0)$ is distributed around $\tau^*$, and this distribution is governed by $\Phi_a$. We stress that sensor $a$ does not know the value $\tau^*$ and from its own local perspective the execution started at time $\ell_a(0)$.  Sensors rely on both social interactions and further environmental cues\footnote{In order for the model to include environmental cues, one or more of the sensors can be taken to represent the global clock. The initial times of these sensors are chosen according to highly concentrated distributions, $\Phi_a$,  around $\tau^*$ and  remain fixed thereafter.}
to improve their estimates of the global time. In between such events  sensors are free to  perform ``shift'' operations to adjust their local~clocks. 
%
To focus on the problems occurred by the initial inaccurate estimations of $\tau^*$ and the noise in the communication we restrict our attention to {\em drift-free} settings \cite{Attiya,Lynch}, in which all clocks tick at the same rate, consistent with the global time.

\paragraph{Opinions.}
The drift-free assumption reduces the external clock-synchronization problem to the problem of estimating $\tau^*$. Indeed,  recall that local clocks are initialized to different values but progress at the same rate.  Because sensor $a$ can keep the precise time since the beginning of the execution, its deviation from the global time can be corrected had it known the difference between, $\ell_a(0)$, the initial local clock of $a$, and $\tau^*$, the global time when the execution started. Hence, one can view the goal of sensor $a$ as estimating $\tau^*$. That is, without loss of generality, we may assume that all shifts performed by sensor $a$ throughout the execution are shifts of its initial position $\ell_a(0)$ aiming to align it to be as close as possible to $\tau^*$. Taking this perspective, we associate with each sensor  an {\em opinion} variable $x_a$, initialized to $x_a(0):=\ell_a(0)$,  and the goal of $a$ is to have its opinion be as close as possible to
$\tau^*$. 
We view the opinion $x_a$ as an {\em estimator} of $\tau^*$, and note that initially, due to the properties of $\Phi_a$, this estimator is unbiased, i.e., $\mean(x_a(0)-\tau^*)=0$. It is required that at any point in the execution, the opinion $x_a$  remains an unbiased estimator of $\tau^*$, and the goal of $a$ is to  minimize its Mean Square Error (MSE).

Due to this simple relation between internal clocks and opinions, in the remaining of this paper, we shall adopt the latter perspective and 
concern ourselves only with optimizing the opinions of  sensors as estimators for $\tau^*$, without discussing further the internal clocks.

\paragraph{Rounds.}
For simplicity of presentation, we assume that the execution proceeds in discrete steps, or rounds. We stress however that the rounds represent the order in which communication events occur (as determined by the meeting-pattern, see below), and do not necessarily correspond to the actual time. Given an algorithm $A$, the opinion maintained by the algorithm at round $t$ (where $t$ is a non-negative integer) at sensor $a$ is denoted by $x_a(t,A)$.  As mentioned, the algorithm  aims to keep this value as close as possible to~$\tau^*$. When $A$ is clear form the context, we may omit writing it and use the term $x_a(t)$ instead. 

In each round $t\geq 1$, each sensor may first choose to shift (or not) its opinion, and then, if specified in the meeting pattern, it  observes another specified sensor, thus obtaining some information. To summarize, in each round, a sensor executes the following consecutive actions:
(1)~Perform internal computation;
(2) Perform an opinion-shift: $x_a(t)=x_a(t-1)+\Delta(x)$;
and (3)
Observe (or not) another sensor.
For simplicity, all these three operations
are assumed to occur instantaneously, that is, in zero time.

\paragraph{Mobility and adversarial independent meeting patterns.} \parskip0.1cm 
In cases where sensors are embedded in a Euclidian space, distances between positioning of sensors may impact the possible interactions. To account for physical mobility, and be as general as possible, we assume that  an oblivious adversary controls the meeting pattern. That is, the adversary decides (before the execution starts), for each round, which sensor observes which other sensor. 

A model that includes an unlimited adversary that controls the meeting pattern is in some sense too general\footnote{For example, 
in our model we assume that sensors are anonymous but we compare such algorithm to the best possible algorithm that knows the identities of sensors and the whole meeting pattern in advance. In such case,  an arbitrary interaction pattern can match all sensors such that interactions occur only within  pairs. As the sensors are anonymous, they cannot distinguish this case from  other, more uniform, meeting patterns and hence, cannot be expected to act as efficiently as algorithms with identified sensors.
Some limitations on the adversarial interaction network are therefore required.}.
In this preliminary work on the subject, we restrict the adversary to provide only  {\em independent} meeting patterns, in which the set of sensors in the transitive history of each observing sensor is disjoint from the one of the observed sensor. As indicated by this work, the case of independent meeting patterns is already complex. We leave it to future work to handle dependent meeting patterns. 

Formally,
given a pattern of meetings~$\cP$, sensor $a$ and round $t$, we first  define the~set~of~{\em relevant} sensors of  $a$ at time~$t$, denoted by  $\cR_a(t,\cP)$.
 At time zero,  we define $\cR_a(0,\cP):=\{a\}$, and at round $t$, $\cR_a(t,\cP):=\cR_a(t-1,\cP)\cup \cR(b,t-1,\cP)$ if $a$ observes $b$ at time~$t-1$ (otherwise $\cR_a(t,\cP):=\cR_a(t-1,\cP)$). A meeting pattern $\cP$ is called {\em independent}  if whenever some sensor~$a$ observes a sensor $b$ at some time~$t$,   then $\cR_a(t-1,\cP)\cap \cR(b,t-1,\cP)=\emptyset~.$
 Note that an 
 independent meeting pattern guarantees that given $\tau^*$, the internal clocks of two interacting sensors  are independent. However,  given $\tau^*$ and the internal clock  of $a$, the internal clock of $b$  and the relative time measurement between them are dependent (this point is explained in further details in Section \ref{difficulties}).

Note that independent-meeting patterns are not very good representatives of communication in static networks\footnote{Indeed, in such patterns a sensor will not contact the same sensor twice, which contradicts many natural communication schemes in static networks. We note, however, that in some cases, a sequence of multiple consecutive observations between sensors can be compressed into a single  observation of higher accuracy thus reducing the dependencies between observations, and possibly converting a dependent meeting pattern into an independent one. For example, if sensors have unique identities and sensor $a$ observes sensor $b$ several times is a row, and it is guaranteed that sensor $b$ did not change its state during these observations, then these observations can be treated by  $a$ as a single, more accurate, observation of  $b$.}. On the other hand, independent meeting patterns fit  well with highly stochastic short-time scales communication patterns, in which each sensor observes only few other arbitrary sensors.  In this sense, such patterns can be considered as representing an extreme case of dynamic systems.
 
\parskip0.1cm

Because sensors have no control of when their next interaction will occur, or if it will occur at all, we require that estimates at each sensor be as accurate as possible at {\em any} point in time. This requirement is stronger than the liveness property that is typically required from distributed algorithms \cite{Lamport}.

\paragraph{Convergence time.} 
Consider a meeting pattern $\cP$. Given small $\epsilon>0$, the {\em convergence time} $T(\epsilon)$ of an algorithm $A$ is defined as the minimal number of observations made by the typical sensor until its variance is less than $\epsilon^2$. More formally, let $\rho$ denote  the first round when we have more than half of the population satisfying
 $\var(X_a(t,A))<\epsilon^2$. For each sensor $a$, let $R(a)$ denote the number of observations made by $a$ until time $\rho$. The convergence time $T(\epsilon)$ is defined as the median of $R(a)$ over all sensors $a$. Note that $T(\epsilon)$ is a lower bound on $\rho$, since $\rho \geq R(a)$ for every sensor $a$.


 %


\paragraph{Communication.}\parskip0.1cm
We assume that sensors are anonymous and hence, in particular, they do not know who they observe. Conversely, for the sake of lower bounds, we allow a much more liberal setting, in which sensors have unique identifiers and know who they interact with.

When a sensor $a$ observes another sensor $b$ at some round $t$,
the information transferred in this interaction contains a {\em passive} component and, possibly, a complementary  {\em active} one. The passive component is a noisy relative deviation measurement between their opinions:
$$\tilde{d}_{ab}(t)=x_b(t)-x_a(t)+\eta,$$ where the additive noise term, $\eta$, is  chosen from the noise probability distribution $N(\eta)\in \cF$ whose variance is known to the sensors. (Note that this measurement is equivalent to the relative deviation measurement between the sensors' current local times because all clocks tick at the same pace.)


\subsection{Elementary algorithms}
Our reference for evaluating performances is  algorithm $\OPT$ which operates under the most liberal version of our model, which carries no restrictions on memory, communication capacities or internal computational power, and  provides the best possible estimators at any sensor and at any time (we further assume that sensors acting under  $\OPT$ know the meeting pattern in advance).
In general,  algorithm $\OPT$ may use complex calculations over very wasteful memories that include detailed distribution density functions, and possibly, accumulated measurements.
Our  main goal is to identify an algorithm whose performance  is highly competitive with that of $\OPT$ but wherein communication and memory are economically used, and the local computations simple.
Indeed, when it comes to applications to tiny and limited processors, simplicity and  economic use of communication~are~crucial~restrictions.

 An algorithm is called {\em elementary} if the internal state  of
each sensor $a$ is some real\footnote{We assume  real numbers for simplicity. It seems reasonable to assume that when sufficiently accurate approximation is stored instead of the real numbers similar results could be obtained.} number $y_a\in \mathbb{R}$, and, more importantly, the  internal computations that a sensor  can perform  consist of a constant number 
of  basic arithmetic operations, namely: addition, subtraction, multiplication, and~division.

 \parskip0.1cm

\subsection{Competitive analysis}

Fix a finite family~$\cF$ of  smooth $pdf$'s centered at zero (see the definition for smoothness in the next paragraph), and fix an assignment of a distribution $\Phi_a\in \cF$ to each sensor $a$. For an algorithm $A$ and an independent meeting pattern  $\cP$, let $X_a(t,A,\cP)$  denote the random variable indicating the opinion of sensor $a$ at round $t$. Let $\mean(X_a(t,A,\cP))$ and $\var(X_a(t,A,\cP))$ denote, respectively, the mean and variance of $X_a(t,A,\cP)$,
where these are taken over all possible random initial opinions, communication errors, and possibly, coins flipped by the algorithm. Note that the unbiased assumption requires that $\mean(X_a(t,A,\cP))=\tau^*$.
An algorithm~$A$ is called $\lambda$-competitive, if for {\em any} independent pattern of meetings $\cP$, {\em any} sensor $a$, and at {\em any} time~$t$, we have:
$
\var(X_a(t,A,\cP))\leq \lambda\cdot \var(X_a(t,\OPT,\cP)).
$

\subsection{Fisher information and the Cram\'er-Rao bound}\label{pre:fisher}

The Fisher information is a standard way of evaluating the amount of information that a set of random measurements holds about an unknown parameter $\tau$ of the distribution from which these measurements were taken.
We provide here some definitions for this notion; for more information   the reader may refer to
\cite{Cover,Zamir1998}.

A single variable probability distribution function ($pdf$) $\Phi$ is called  {\em smooth} if it satisfies the following conditions, as stated by Stam \cite{Stam1959}:
(1) $\Phi(x)>0$ for any $x\in \mathbb{R}$,
(2) the derivative $\Phi'$ exists, and
(3) the integral $\int \frac{1}{\Phi(y)}(\Phi'(y))^2dy$ exists, i.e., $\Phi'(y)\rightarrow 0$ rapidly enough for $|y| \rightarrow \infty$. Note that, in particular, these conditions  hold for natural distributions such as the Gaussian distribution. Recall that we consider a finite set $\cal{F}$ of smooth one variable $pdf$s, one of them being the noise distribution $N(\eta)$, and all of which~are~centered~at~zero.

For a smooth $pdf$ $\Phi$, let $J^\tau_{\Phi}:=\int \frac{1}{\Phi(y)}(\Phi'(y))^2dy$ denote the Fisher information in the parameterized family $\{(\Phi(x,\tau)\}_{\tau\in\mathbb{R}}=\{(\Phi(x-\tau)\}_{\tau\in\mathbb{R}}$  with respect to $\tau$.
In particular, let  $J_{N}=J^\tau_N$ denote the Fisher information in the parameterized family $\{N(\eta-\tau)\}_{\tau\in\mathbb{R}}$.
More generally, consider a multi-variable $pdf$ family $\{(\Phi(z_1-\tau, z_2 \ldots z_k))\}_{\tau\in\mathbb{R}}$ where~$\tau$ is a translation parameter.
The Fisher information in this family
with respect to~$\tau$ is defined as:
$$
J_{\Phi}^{\tau}=\int\frac{1}{\Phi(z_1-\tau, z_2 \ldots z_k)}~\left[~\frac{d\Phi(z_1-\tau, z_2 \ldots z_k)}{d\tau}~\right]^2~ dz_1, dz_2 \ldots dz_k~
~~\mbox{(if the integral exists)}$$ 
As previously noted \cite{Zamir1998},   since $\tau$ is a translation parameter, Fisher information is both unique (there is no freedom in choosing the parametrization) and independent of  $\tau$.

The Fisher information derives its importance by association with the Cram\'er-Rao inequality \cite{Cover}. This inequality lower bounds  the variance of the best possible estimator of~$\tau^*$  by the reciprocal of the Fisher information that corresponds to the random variables on which this estimator is~based.
\begin{theorem}\label{thm:Cramer} {\bf [The Cram\'er-Rao inequality]}
Let $\hat{X}$ be any unbiased estimator of $\tau^*\in\mathbb{R}$ which is based on a multi-variable sample $\bar{z}=(z_1, z_2 \ldots z_k)$ taken from~$\Phi(z_1-\tau^*, z_2 \ldots z_k)$. Then
$\var (\hat{X})\geq {1}/{J_{\Phi}^\tau} .$
\end{theorem}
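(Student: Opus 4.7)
The plan is to prove the Cramér--Rao inequality via the classical score-function / Cauchy--Schwarz argument, adapted to the translation-family setting used in the paper. Write $\Phi_{\tau}(\bar{z}) = \Phi(z_1-\tau, z_2, \ldots, z_k)$ and define the \emph{score function}
\[
S(\bar{z},\tau) \;=\; \frac{d}{d\tau}\log \Phi_\tau(\bar z) \;=\; \frac{1}{\Phi_\tau(\bar z)}\cdot \frac{d\Phi_\tau(\bar z)}{d\tau}.
\]
The first step I would carry out is to verify the two standard identities: $\Exp{S} = 0$ and $\var(S) = J_\Phi^\tau$. The former follows by differentiating $\int \Phi_\tau(\bar z)\, d\bar z = 1$ under the integral sign, which yields $\int \frac{d\Phi_\tau}{d\tau}\, d\bar z = 0$, i.e.\ $\Exp{S}=0$. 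The latter is exactly the definition of $J_\Phi^\tau$ given earlier, since $\Exp{S^2} = \int \frac{1}{\Phi_\tau}\bigl(\frac{d\Phi_\tau}{d\tau}\bigr)^2 d\bar z = J_\Phi^\tau$.

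Next I would exploit the unbiasedness hypothesis $\Exp{\hat X} = \tau$ for every $\tau$. Differentiating both sides with respect to $\tau$ and again moving the derivative inside the integral, one gets
\[
1 \;=\; \frac{d}{d\tau}\int \hat X(\bar z)\,\Phi_\tau(\bar z)\, d\bar z \;=\; \int \hat X(\bar z)\,\frac{d\Phi_\tau(\bar z)}{d\tau}\, d\bar z \;=\; \Exp{\hat X\cdot S}.
\]
Combined with $\Exp{S}=0$, this gives $\mathrm{Cov}(\hat X, S) = \Exp{\hat X \cdot S} - \Exp{\hat X}\Exp{S} = 1$. Now I apply the Cauchy--Schwarz inequality to conclude
\[
1 \;=\; \mathrm{Cov}(\hat X,S)^2 \;\leq\; \var(\hat X)\cdot \var(S) \;=\; \var(\hat X)\cdot J_\Phi^\tau,
\]
which immediately yields $\var(\hat X) \geq 1/J_\Phi^\tau$, as claimed; evaluating at $\tau = \tau^*$ completes the proof.

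The only non-routine point is justifying the two interchanges of differentiation and integration (for $\int \Phi_\tau$ and for $\int \hat X \Phi_\tau$). This is where the smoothness conditions imposed by the paper (positivity of $\Phi$, existence of $\Phi'$, and finiteness of $\int (\Phi')^2/\Phi$, i.e.\ sufficient decay at infinity) come in: they supply the dominating-integrability hypotheses needed for Leibniz's rule. Since $\tau$ is a pure translation parameter, one can equivalently rewrite $\frac{d}{d\tau}\Phi(z_1-\tau,\ldots) = -\frac{\partial}{\partial z_1}\Phi(z_1-\tau,\ldots)$, so the interchange reduces to a standard argument using the tail decay of $\Phi$ in the $z_1$ variable; for the second interchange, the finiteness of $\var(\hat X)$ (which is what we want to lower bound, so we may assume it finite without loss of generality) together with Cauchy--Schwarz supplies the required domination. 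Thus the main obstacle is purely a regularity check, and everything else is the standard score-covariance computation.
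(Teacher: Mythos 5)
The paper does not actually prove this theorem; it cites it as a known result from the information-theory literature (Cover and Thomas). Your proof is the standard and correct one: the score function $S = \frac{d}{d\tau}\log\Phi_\tau$, the two identities $\Exp{S}=0$ and $\var(S)=J_\Phi^\tau$, the covariance identity $\Exp{\hat X S}=1$ obtained by differentiating the unbiasedness condition, and the Cauchy--Schwarz step. Your remarks about where the paper's smoothness hypotheses enter (justifying the two applications of Leibniz's rule, using that $\tau$ is a translation parameter so $\partial_\tau$ can be traded for $-\partial_{z_1}$) are exactly the right regularity observations. Nothing to compare against in the paper itself, since the paper treats this as a black box; your argument is the canonical proof and it is sound.
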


\paragraph{Initial Fisher-tightness :}
To define the initial Fisher-tightness  parameter $\Delta_0$, we first define the {\em Fisher-tightness } of a single variable smooth distribution $\Phi$ centered at zero, as $\Delta(\Phi)= \var(\Phi)\cdot J^\tau_\Phi~.$ Note that, by the Cram\'er-Rao bound, $\Delta(\Phi)\geq 1$ for any such distribution $\Phi$. Moreover, equality holds if  $\Phi$ is Gaussian \cite{Cover}.
Recall that  $\cF$ is the finite collection of  the smooth distributions  containing the distributions $\Phi_a$ governing the initial opinions of sensors.
The {\em initial Fisher-tightness} $\Delta_0$ is the maximum of the Fisher-tightness  over all distributions in  $\cF$ and the noise distribution. Specifically, let  $\Delta_0=\max\{ \Delta(\Phi)\mid \Phi\in\cF\}.$ 
Two important observations are: 
\begin{itemize}
\item If $\cF$ contains a constant number of distributions then $\Delta_0$ is a constant.
\item
If the distributions in~$\cF$  are all  Gaussians then $\Delta_0=1$.
\end{itemize}

%
%

\section{\large Technical difficulties}\label{difficulties}
It is known for a single sensor, one can associate weights to samples so that a weighted-average procedure can fuse them optimally  \cite{McNamara1987a}.  The proof therein relies on the assumption  that all probability distributions are  Gaussians whose functional forms, indeed, depend on their second moments only.
Our setting is more complex, since it includes arbitrary differentiable $pdf$'s and multiple distributed sensors whose relative opinions constantly change.\parskip0.1cm

The extension to multiple mobile sensors adds another dimension to the problem. One difficulty lies in the fact that the partial knowledge held by each sensor is relative ({\em e.g.}, an estimation of the deviation  between the sensor's opinion and $\tau^*$) and hence may require the sensors to carefully fuse  perspectives other than their own. This difficulty is enhanced, as the sensors  constantly shift their opinions.
Indeed, for elementary algorithms, where memory is restricted to a single parameter, storing the sum of previous shifts in the memory of a sensor is possible, but could drastically limit the degrees of freedom for encoding other information. On the other hand, without encoding previous shifts, it is not clear how sensor $a$ should treat information it had received from $b$.

In addition, compression of memory and communication appears to be detrimental. Indeed,  maintaining and communicating highly detailed memories can, in some cases, significantly improve a sensor's assessment of the target value. However, maintaining a high degree of detail requires storing an arbitrary number of $pdf$ moments which may grow with every interaction. Hence,  it is not clear how to compress the information into  few meaningful parameters while avoiding the accumulation of errors and runaway behavior.

Several technical difficulties arise when attempting to bound the performances of different algorithms. In natural type of algorithms, sensors' memories can be regarded as maintaining $pdf$s that summarize their knowledge regarding their deviation from the  target value $\tau^*$.
One of the analysis difficulties corresponds to the fact that the $pdf$ held by a sensor at round $t$ depends on many previous deviation measurements in a non-trivial way, and hence the variance of a realization of the $pdf$ does not necessarily correspond to the variance of the sensors' opinion, when taking into account all possible realizations of all measurements. Hence, one must regard each $pdf$ as a multi-variable distribution. A second problem has to do with dependencies. The independent meeting pattern guarantees that the memory $pdf$'s of two interacting sensors are independent, yet, given the $pdf$ of the observing sensor, the $pdf$ of the observed sensor and the deviation measurement become dependent. Such dependencies make it difficult to track the evolution of a sensor's accuracy of estimation over time. Indeed, to tackle this issue, we had to extend the Fisher information inequality \cite{Stam1959,Zamir1998,Rioul}  to a multi-variable dependent convolution case. \parskip0.2cm

\bigskip

\section{Lower bounds on the variance of $\OPT$}
\parskip0.1cm
In this section we provide lower bounds on the performances of algorithm $\OPT$ over a fixed independent pattern of meetings $\cP$. Note that we are interested in bounding the performances of $\OPT$  and not in specifying its instructions. Identifying the details of $\OPT$ may still be of interest, but it is beyond the scope of this paper. 

For simplicity of presentation, we assume that the rules of $\OPT$  are deterministic. We note, however, that our results can easily be extended to the case that $\OPT$ is probabilistic. For simplicity of notations, since this section deals only with algorithm $\OPT$ acting over $\cP$, we use variables, such as the opinion $X_a(t)$ and the memory~$Y_a(t)$ of sensor $a$, without parametrizing them by neither  $\OPT$ nor by~$\cP$.

Under algorithm $\OPT$, we assume that each sensor holds initially not only the variance of $\Phi_a$, but the precise functional form of the distribution $\Phi_a$ (recall, $\Phi_a$ is centered at zero). In addition, we assume that sensors have unique identifiers and that each sensor knows the whole pattern $\cP$ in advance. Moreover, we assume that each sensor $a$ knows for each other sensor $b$, the $pdf$ $\Phi_b$ governing $b$'s initial opinion. All this information is stored in one designated part of the memory of $a$.

Since $\OPT$ does not have any bandwidth constrains,  we may assume, without loss of generality,  that whenever some  sensor $a$ observes another sensor $b$, it obtains the whole memory content of $b$. Since  $\OPT$ is deterministic, its previous opinion-shifts can be extracted from its interaction history, which is, without loss of generality, encoded in its memory\footnote{In case $\OPT$ is probabilistic, previous shifts can be extracted from the memory plus the results of coin flips which may be encoded in the memory of the sensor as well.}. Hence, when sensor~$a$ observes sensor $b$ at some round $t$, and receives $b$'s memory together with the noisy measurement $\tilde{d}_{ab}(t)=x_b(t)-x_a(t)+\eta$, sensor $a$ may extract all previous opinion-shifts
of both itself and~$b$, treating the measurement $\tilde{d}_{ab}(t)$ as a  noisy measurement of the deviation between the initial opinions, i.e., $\tilde{d}_{ab}(0)=x_b(0)-x_a(0)+\eta$. In other words,  to understand the behavior of $\OPT$ at round $t$, one may assume
 that sensors never shift their opinions until round $t$, when they use all memory they gathered to shift their opinion in the best possible manner\footnote{
This observation implies, in particular, that previous opinion-shifts of sensors do not affect subsequent estimators in a way that may cause a conflict (a conflict may arise, e.g., when optimizing one sensor at one time necessarily makes estimators at another sensor, at a later time, sub-optimal), hence algorithm $\OPT$ is well-defined.}.
It follows that apart from the designated memory part that all sensors share, the memory~$M_a(t)$ of sensor $a$ at round $t$ contains the initial opinion $X_a(0)$ and   a collection $Y_a(t-1):=\{\tilde{d}_{bc}(0)\}_{bc}$ of relative deviation measurements between initial opinions.
That is, $M_a(t)=(X_0(t),Y_a(t-1)).$ This multi-valued memory  variable $M_a(t)$ contains all the information available to $a$ at round $t$. In turn, this information is used by the sensor to obtain its opinion $X_a(t)$ which
is required  to serve as an unbiased estimator of $\tau^*$.

\subsection{\large The Fisher Information of sensors}\label{sub:relative}
We now define the notion of the Fisher Information associated with a sensor $a$ at  round $t$. This definition will be used to bound from below the variance of  $X_a(t)$ under algorithm $\OPT$.

Consider  the  multi-valued memory  variable $M_a(t)=(X_0(t),Y_a(t-1))$ of sensor $a$ that at  round $t$.
Note that $Y_a(t-1)$ is independent of $\tau^*$. Indeed, once the adversary decides on the value  $\tau^*$, all sensors' initial opinions are chosen with respect to $\tau^*$. Hence, since sensors' memories contains only relative deviations between opinions, the memories by themselves do not contain any information regarding~$\tau^*$. 
In contrast, given~$\tau^*$, the random variables $Y_a(t-1)$ and $X_a(0)$ are, in general, dependent. 
Furthermore, in contrast to $Y_a(t-1)$, the value of  $X_a(0)$ depends on $\tau^*$, as it is chosen according to $\Phi_a(x-\tau^*)$. Hence,
${M}_{a}(t)$ is distributed according to a $pdf$ family $\{(m_{a}(t),\tau)\}$ parameterized by a translation parameter $\tau$.
Based on $M_{a}(t)$, the sensor produces  an unbiased estimation ${X}_{a}(t)$ of~$\tau^*$, that is,
it should hold that:
$
\mean({X}_{a}(t)-\tau^*)=0,
$
where the mean is taken with respect to the distribution of the random multi-variable  ${M}_{a}(t)$.\parskip0.2cm

\noindent{\bf Definition:} The {\em Fisher Information (FI)} of sensor $a$ at  round  $t$, termed   $J_a(t)$,  
is the the Fisher information in the parameterized family $\{(m_{a}(t),\tau)\}_{\tau\in\mathbb{R}}$ with respect to $\tau$.


By the Cram\'er-Rao bound, the variance  of any unbiased estimator used by the sensor~$a$ at round $t$ is bounded from below by
the reciprocal of the {\em FI} of sensor~$a$ at that time. That is, we have:
\begin{lemma}\label{for:fish}
$\var(X_a(t))\geq {1}/{J_a(t)}$.
\end{lemma}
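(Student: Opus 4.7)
The plan is to apply the Cram\'er-Rao inequality (Theorem \ref{thm:Cramer}) essentially directly to the random memory $M_a(t)$ and the estimator $X_a(t)$, so the work lies in verifying that all hypotheses of that theorem are met rather than in any new calculation.

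First, I would record that since $\OPT$ is assumed deterministic and the multi-valued variable $M_a(t)=(X_a(0),Y_a(t-1))$ was argued to contain all the information available to $a$ at round $t$, the opinion $X_a(t)$ is a deterministic function of $M_a(t)$. Thus $X_a(t)$ is a statistic of the sample $M_a(t)$. Next, I would recall from the setup just preceding the lemma that $M_a(t)$ is distributed according to a family $\{\phi_a(m_a(t),\tau)\}_{\tau\in\mathbb{R}}$ in which $\tau$ enters only as a translation parameter of the first coordinate $X_a(0)$ (while the relative-deviation entries of $Y_a(t-1)$ are independent of $\tau$). Consequently the Fisher information of this family with respect to $\tau$ is exactly the quantity $J_a(t)$ defined in Section \ref{sub:relative}.

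Finally, by the required unbiasedness of opinions at all rounds, $\mean(X_a(t)-\tau^*)=0$ under the true parameter value $\tau=\tau^*$, so $X_a(t)$ is an unbiased estimator of $\tau^*$ based on the sample $M_a(t)$. Applying Theorem \ref{thm:Cramer} to the multi-variable translation family $\{\phi_a(m_a(t),\tau)\}_{\tau\in\mathbb{R}}$ and the estimator $\hat X:=X_a(t)$ yields
\[
\var(X_a(t))\;\geq\;\frac{1}{J_a(t)},
\]
as claimed.

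The only subtle point to double-check is that the joint density $\phi_a(m_a(t),\tau)$ really is smooth enough for the integral defining $J_a(t)$ to exist and for the Cram\'er-Rao bound to apply; this follows inductively from the smoothness assumptions on the members of $\cF$ (including the noise distribution $N$) together with the fact that the entries of $Y_a(t-1)$ are obtained by subtracting smooth initial-opinion variables and adding independent smooth noise. Because this multi-variable setup and translation structure have already been laid down in the preceding subsection, the lemma itself is essentially a one-line corollary of Theorem \ref{thm:Cramer}; the real work has been done in defining $J_a(t)$ correctly, and the main obstacle, if any, is simply keeping track of the translation-parameter structure across the memory's multiple coordinates.
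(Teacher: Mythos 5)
Your proposal is correct and takes exactly the approach the paper intends: Lemma~\ref{for:fish} is stated as an immediate consequence of the Cram\'er-Rao inequality (Theorem~\ref{thm:Cramer}) applied to the unbiased estimator $X_a(t)$ viewed as a statistic of the memory $M_a(t)$, whose distribution forms a translation family in $\tau$ with Fisher information $J_a(t)$. The extra care you take in checking the hypotheses (determinism, unbiasedness, smoothness, translation structure) is welcome but adds no new idea beyond what the paper already sets up in the paragraph preceding the lemma.
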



\subsection{\large An upper bound on the Fisher Information $J_a(t)$}\parskip0.2cm
Lemma \ref{for:fish} implies that
lower bounds on the variance of the opinion of a sensor  can  be obtained by bounding from above the corresponding  {\em FI}.
To this end,
 our next goal  is to prove the following recursive inequality.
\begin{theorem}\label{lem:fishernoise}
The {FI} of sensor $a$ under algorithm $\OPT$ satisfies:   $J_a(t+1)\leq J_a(t) + {1}/({\frac{1}{J_b(t)} + \frac{1}{J_{N}}}).$
\end{theorem}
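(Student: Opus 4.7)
The plan is to bound the Fisher-information increment $J_a(t+1)-J_a(t)$ from $a$'s round-$t$ observation of $b$ by the ``Fisher channel capacity'' $f(J_b(t)):=J_b(t)\,J_N/(J_b(t)+J_N)$. I would proceed by (i) reducing the new content that the observation adds to $M_a(t+1)$ to a noise-corrupted copy of $M_b(t)$; (ii) applying Stam's FII pointwise in the $\tau$-free random quantity $Y_b(t-1)$; and (iii) closing with Jensen's inequality.

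Beyond $M_a(t)=(X_a(0),Y_a(t-1))$, the observation deposits the relative part $Y_b(t-1)$ of $b$'s memory together with the noisy reading $\tilde{d}_{ab}(0)=X_b(0)-X_a(0)+\eta$. Since $X_a(0)\in M_a(t)$, the new pair $(Y_b(t-1),\tilde{d}_{ab}(0))$ is, conditional on $M_a(t)$, informationally equivalent to $(Y_b(t-1),X_b(0)+\eta)$ --- i.e., to $b$'s memory $M_b(t)$ with an independent $N$-distributed sample grafted onto its single absolute coordinate. The independent-meeting hypothesis $\cR_a(t-1,\cP)\cap\cR_b(t-1,\cP)=\emptyset$ makes $M_a(t)$ conditionally independent of $(M_b(t),\eta)$ given $\tau^*$. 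Additivity of Fisher information across independent components (the cross term in the squared log-derivative vanishes by the Fisher-score identity) then yields
$$J_a(t+1)\;\leq\;J_a(t)\;+\;J^\tau_{(Y_b(t-1),\,X_b(0)+\eta)}.$$

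Because every coordinate of $Y_b(t-1)$ is a relative deviation from which $\tau$ has cancelled, $J^\tau_{Y_b(t-1)}=0$. Writing $j(y):=J^\tau_{X_b(0)\mid Y_b(t-1)=y}$, the Fisher-information chain rule gives both $J_b(t)=\mathbb{E}_y[j(y)]$ and $J^\tau_{(Y_b(t-1),\,X_b(0)+\eta)}=\mathbb{E}_y\!\bigl[J^\tau_{X_b(0)+\eta\mid Y_b(t-1)=y}\bigr]$. The joint law of $(X_b(0),Y_b(t-1))$ is invariant under $(x_b,\tau)\mapsto(x_b+c,\tau+c)$ because every $X_c(0)$ in $b$'s transitive history shifts uniformly with $\tau$ while every noise entering $Y_b(t-1)$ is $\tau$-free; hence for each $y$ the conditional family $X_b(0)\mid Y_b(t-1)=y$ is a translation family in $\tau$ with Fisher information $j(y)$, and Stam's classical FII applied to its convolution with the independent $N$-density gives the pointwise bound
$$J^\tau_{X_b(0)+\eta\mid Y_b(t-1)=y}\;\leq\;\frac{j(y)\,J_N}{j(y)+J_N}\;=:\;f(j(y)).$$

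A direct calculation gives $f''(x)=-2J_N^2/(x+J_N)^3<0$, so $f$ is strictly concave on $[0,\infty)$; Jensen's inequality therefore yields
$$\mathbb{E}_y\bigl[f(j(y))\bigr]\;\leq\;f\bigl(\mathbb{E}_y[j(y)]\bigr)\;=\;f(J_b(t))\;=\;\frac{1}{\tfrac{1}{J_b(t)}+\tfrac{1}{J_N}},$$
and combining with the first display closes the argument. The hard part is the pointwise Stam bound in step~(ii): it is essentially the multi-variable dependent convolution extension of Stam's inequality that the paper advertises in Section~\ref{difficulties}. Making it rigorous requires verifying translation-equivariance of the conditional family $X_b(0)\mid Y_b(t-1)=y$ (which follows from the uniform $\tau$-shift of every $X_c(0)$ in $\cR_b(t-1,\cP)$) and checking the smoothness conditions that let one commute $\partial_\tau$ with the conditional expectation. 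Once those technicalities are dispatched, concavity of $f$ together with Jensen converts the pointwise bound into the global Fisher-channel bound $f(J_b(t))$, reflecting the fact that in a single observation $a$ can extract from $b$'s memory only what passes through one noisy deviation channel of Fisher capacity $J_N$.
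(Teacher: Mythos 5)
Your proposal is correct and follows essentially the same path as the paper: both decompose $J_a(t+1)$ via additivity of Fisher information over the conditionally independent pieces $M_a(t)$ and $(Y_b(t-1),X_b(0)+\eta)$, and both extend Stam's inequality to the multi-variable dependent convolution by conditioning on the $\tau$-free block $Y_b(t-1)$. The only stylistic divergence is the final aggregation step: the paper keeps Stam's parametric form $(\alpha_1+\alpha_2)^2 J_R\leq \alpha_1^2 J_{P_1}+\alpha_2^2 J_{p_2}$, integrates it against $p(\bar{x}_3)$, and then chooses global $\alpha_i$ to recover the harmonic-mean bound, whereas you pass to the harmonic form pointwise and invoke Jensen on the concave capacity function $f(j)=jJ_N/(j+J_N)$ --- two equivalent renditions of the same calculation, since $f$ is the infimum of the affine family indexed by $(\alpha_1,\alpha_2)$ and integrating each affine upper bound before infimizing is precisely Jensen for an infimum-of-affine representation of a concave function.
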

\begin{proof}
Consider the case that at  round $t$, sensor $a$ observes sensor $b$. After the interaction, the random multi-variable   $Y_a(t)$ is composed of:
(1)
the random variable $\tilde{D}_{ab}(0):=X_b(0)-X_a(0)+N$,  corresponding to  the noisy deviation measurement between the initial opinions of $a$ and $b$, and
(2)
the relative deviation measurements in both $Y_a(t-1)$ and $Y_b(t-1)$.
We now aim at calculating the FI $J_{a}(t+1)$ available to sensor $a$ at time $t+1$, with respect to the parameter~$\tau$. This is
the FI with respect to~$\tau$,
 in the multi-variable ${M}_{a}(t+1)=(X_a(0),Y_a(t))=(X_a(0),\tilde{D}_{ab}(0),Y_a(t-1), Y_b(t-1))$, where $X_a(0)$ is distributed according to $\Phi_a(x-\tau^*)$. Taking $\tilde{X}_b(0)=X_a(0)+ \tilde{D}_{ab}(0)=X_b(0)+N$, this latter FI becomes the same as the FI in the random variables: $(X_a(0),\tilde{X}_b(0), Y_a(t-1), Y_b(t-1))$.
Since the meeting pattern is independent, then given the environment value  $\tau^*$,  the random multi-variable  $(X_a(0), Y_a(t-1))$ is  independent  of the random multi-variable   $(\tilde{X}_b(0), Y_b(t-1))$. By the additivity property of the Fisher information with respect to independent random multi-variables (see \cite{Stam1959}),
the {\em FI} $J_{a}( t+1)$   therefore equals the {\em FI} $J_a(t)$ (which is the FI in the random multi-variable $(X_a(0), Y_a(t-1))$)  plus the  FI $\tilde{J}_{b}(t)$ in the random multi-variable $(\tilde{X}_b(0), Y_b(t-1))$, both with respect to $\tau$. That is,
we have:
\begin{equation}\label{Eq:1}
J_a( t+1,A)= J_a(t)+\tilde{J}_{b}(t).
\end{equation}
Let us now focus on the rightmost term in  Equation \ref{Eq:1} and calculate $\tilde{J}_{b}(t)$.
Given that the target value is some~$\tau$, the distribution  of $(\tilde{X}_b(0),Y_b(t-1))$  can  be described by the following convolution:
\begin{equation}\label{Eq:3}
f_{\tilde{X}_b(0),Y_b(t-1)}[(\tilde{x}_b(0),y_b(t-1))\mid \tau]=\int f_{X_b(0),Y_b(t-1)}[\tilde{x}_b(0)-\eta, y_b(t-1)\mid \tau] ~N(\eta)~d\eta.
\end{equation}
Observe that the right hand side of Equation \ref{Eq:3} is a convolution of  the distribution  of $({X}_b(0),Y_b(t-1))$ with the noise distribution $N$, where the convolution occurs with respect to the  random variable ${X}_b(0)$. Our goal now is to bound  the Fisher information in this convolution with respect to $\tau$.

The Fisher information inequality \cite{Stam1959,Zamir1998}
 bounds the Fisher Information of convolutions of single-variable distributions.
 Essentially, the theorem says that if $x$, $y$ and $\tau$ are real values,  $K(x-\tau)$, $R(x-\tau)$  and $Q(x-\tau)$ are  parameterized families and $K=R\otimes Q$, then $J(K)\leq {1}/({\frac{1}{J(R)}+\frac{1}{J(Q)}})$.
 To apply this inequality to Equation~\ref{Eq:3}, we generalize it to distribution with multiple variables, where only one of them  is  convoluted. We rely on the fact that the random variable  $Y_b(t-1)$ does not depend on  $\tau^*$ (recall, it contains only relative~deviation~measurements). This fact turns out to be sufficient to overcome the potential complication rising from the fact that given the environmental value $\tau^*$, the random variable $X_b(0)$ and the  random multi-variable   ${Y}_b(t-1)$ are no longer independent.
In Appendix \ref{extendingSTAM} we prove Lemma~\ref{lem:STAM}    which extends the Fisher information inequality to our multi-variable (possibly dependent) convolution case, enabling to prove  the inequality 
$\tilde{J}_{b}(t)\leq  {1}/({\frac{1}{J_b(t)} + \frac{1}{J_{N}}}).$
Together with Equation~\ref{Eq:1}, we obtain the required recursive inequality for the FI. This completes the proof~of~the~theorem.
\end{proof}

\bigskip
\section{\large A highly-competitive elementary algorithm}\label{sec:conf}
We define an elementaryelementary  algorithm, termed $\Conf$, and prove that its performances are highly-competitive with those of $\OPT$.
In this algorithm, each sensor $a$ stores in its memory a single parameter $c_a\in\mathbb{R}$ that  represents its \emph{accuracy}  regarding the quality of its current opinion with respect~to~$\tau^*$.
The initial accuracy  of sensor $a$ is  set to $c_a(0)=1/\mbox{var}(\Phi_a)$.
When sensor $a$ observes sensor~$b$ at some round $t$, it receives  $c_b(t)$ and  $\tilde{d}_{ab}(t)$, and acts as follows. Sensor~$a$
first computes the value
$\hat{c}_b(t)={c_b(t)}/({1 + c_b(t) \cdot \var(N)}),$  a reduced accuracy  parameter for sensor $b$ that takes measurement noise into account, and then
proceeds as follows:


\noindent\fbox {{\begin{varwidth}{\dimexpr\textwidth-10\fboxsep-50\fboxrule\relax}
    \parbox{\linewidth}{
\underline{\bf Algorithm $\Conf$}
 \parskip-0.4cm 
\begin{itemize}\parskip0cm
\item
{\bf Update opinion:}  $x_a(t+1)=x_a(t)+\frac{{\tilde{d}_{ab}(t) \cdot \hat{c}_b}(t)}{ {c_a(t)+\hat{c}_b}(t)}. $
\item
{{\bf Update accuracy :}}
$ c_a(t+1)=c_a(t)+\hat{c}_b(t).$
\end{itemize}}
\end{varwidth}
    }
}

\parskip0.1cm

Fix an independent meeting pattern. First, algorithm $\Conf$ is designed such  that at all times, the opinion is  preserved as an unbiased estimator of~$\tau^*$ and the accuracy, $c_a(t)$,  remains equal to the reciprocal of the current variance of the opinion $X_a(t,\Conf)$.
Indeed, the following lemma is proven in Appendix \ref{app:reciprocal}.
\begin{lemma}\label{lem:reciprocal}
 At any round $t$ and for any sensor $a$, we have: (1) the opinion $X_a(t,\Conf)$ serves as an unbiased estimator of  $\tau^*$, and (2)   $c_a(t)=1/\var(X_a(t,\Conf))$.
\end{lemma}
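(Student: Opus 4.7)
The plan is to proceed by induction on the round number $t$, establishing both claims simultaneously for all sensors. The base case is straightforward: at $t=0$, the opinion $X_a(0) = \ell_a(0)$ is drawn from $\Phi_a(x-\tau^*)$, so $\mean(X_a(0)) = \tau^*$, and by the definition $c_a(0) = 1/\var(\Phi_a) = 1/\var(X_a(0))$. For the inductive step, I would consider a generic round in which sensor $a$ observes sensor $b$ (if $a$ makes no observation, both properties are preserved trivially). Assuming the claims hold at round $t$ for all sensors, I would first invoke the independent-meeting-pattern assumption to assert that, conditional on $\tau^*$, the random variables $X_a(t)$ and $X_b(t)$ are independent, and that the fresh noise term $\eta$ is independent of both. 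I would also note that the accuracies $c_a(t), c_b(t), \hat{c}_b(t)$ are deterministic functions of the meeting pattern, hence non-random weights.

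For the unbiasedness part, I would rewrite the update as
$$
X_a(t+1) = X_a(t) + \frac{\hat{c}_b(t)}{c_a(t)+\hat{c}_b(t)} \bigl(X_b(t) - X_a(t) + \eta\bigr),
$$
take expectations, use $\mean(X_a(t)) = \mean(X_b(t)) = \tau^*$ from the induction hypothesis and $\mean(\eta)=0$, and conclude $\mean(X_a(t+1)) = \tau^*$.

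For the accuracy part, I would express $X_a(t+1)$ as a weighted sum $w_a X_a(t) + w_b X_b(t) + w_b \eta$ with $w_a = c_a(t)/(c_a(t)+\hat{c}_b(t))$ and $w_b = \hat{c}_b(t)/(c_a(t)+\hat{c}_b(t))$, and apply independence to get
$$
\var(X_a(t+1)) = \frac{w_a^2}{c_a(t)} + \frac{w_b^2}{c_b(t)} + w_b^2 \var(N).
$$
The key algebraic step is to recognize that $\tfrac{1}{c_b(t)} + \var(N) = \tfrac{1+c_b(t)\var(N)}{c_b(t)} = \tfrac{1}{\hat{c}_b(t)}$, exactly by the definition of $\hat{c}_b(t)$. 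This reduces the expression to $\tfrac{w_a^2}{c_a(t)} + \tfrac{w_b^2}{\hat{c}_b(t)}$, which after substituting the weights telescopes to $\tfrac{c_a(t) + \hat{c}_b(t)}{(c_a(t)+\hat{c}_b(t))^2} = \tfrac{1}{c_a(t+1)}$, matching the accuracy update rule.

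The only subtle point — what I would flag as the main thing to verify carefully — is the joint independence needed to kill the cross terms in the variance: the independent-meeting-pattern hypothesis ensures that the histories (and hence the opinions) of $a$ and $b$ are independent given $\tau^*$, and the measurement noise $\eta$ is drawn fresh and is independent of both histories. Because the accuracies are non-random given the schedule, the weights may be pulled out of $\var(\cdot)$ as constants, and no higher-order coupling between $X_a(t)$, $X_b(t)$, and $\eta$ arises. With these observations in place, the induction closes cleanly for every sensor and every round, yielding both parts of the lemma.
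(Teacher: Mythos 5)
Your proof is correct and follows essentially the same route as the paper's: induction on $t$, using the determinism of the accuracy parameters, the independence of $X_a(t)$, $X_b(t)$, and $\eta$ guaranteed by the independent-meeting-pattern assumption, and the algebraic identity $\tfrac{1}{c_b(t)}+\var(N)=\tfrac{1}{\hat{c}_b(t)}$ to collapse the variance to $\tfrac{1}{c_a(t)+\hat{c}_b(t)}=\tfrac{1}{c_a(t+1)}$. The only difference is presentational (you name the weights $w_a,w_b$ explicitly, while the paper carries the ratios through the computation).
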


\parskip0cm

We are now ready to analyze the competitiveness of algorithm $\CONF$,
by relating the variance of a sensor~$a$ at round $t$ to  the corresponding {\em FI}, namely, $J_a(t)$.
Recall that
Lemma \ref{for:fish} gives a lower bound on the variance of algorithm $\OPT$ at a sensor $a$, which depends on the corresponding  {\em FI} at the sensor. Specifically, we have: $\var({X}_{a}(t,\OPT))\geq {1}/{J_a(t)}.$
Initially, the {\em FI} $J_a(0)$ at a sensor $a$ equals the Fisher information in the parameterized family $\Phi_a(x-\tau)$ with respect to $\tau$, and hence is at most the initial accuracy  $c_a(0)$ times $\Delta_0$.  In Equation \ref{eq:gaim} (see Appendix \ref{app:lem:c_a(t)}) we show that the gain in accuracy  following an interaction is always at least as large the corresponding upper bound on the gain in Fisher information as given in Theorem \ref{lem:fishernoise}, divided by the initial Fisher-tightness . That is: $c_a(t+1)-c_a(t)\geq \left({{1}/({\frac{1}{J_b(t)}+\frac{1}{J_{N}}}})\right)/{\Delta_0}.$ Informally, this property of $\Conf$ can be interpreted as maximizing the Fisher information flow in each interaction up to an approximation factor of $\Delta_0$. By induction (see proof in  Appendix \ref{app:lem:c_a(t)}), we obtain the following.

\begin{lemma}\label{lem:c_a(t)}
At every round $t$, we have
$
c_a(t)\geq {J_a(t)}/{\Delta_0}.
$
\end{lemma}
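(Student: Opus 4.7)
The plan is to prove this by induction on the round $t$. The two ingredients needed are already available: the recursive upper bound on the Fisher Information from Theorem \ref{lem:fishernoise}, and the matching lower bound on the accuracy gain given in Equation \ref{eq:gaim}. The role of $\Delta_0$ is precisely to absorb the ``slack'' between the accuracy update, which is a variance-based quantity, and the Fisher Information update, which is tight for Gaussians and loses at most a factor $\Delta_0$ in general.

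For the base case $t=0$, observe that $c_a(0) = 1/\var(\Phi_a)$ by definition of the algorithm, and $J_a(0)$ equals the Fisher information $J^\tau_{\Phi_a}$ in the parameterized family $\{\Phi_a(x-\tau)\}$. By definition of initial Fisher-tightness, $\var(\Phi_a)\cdot J^\tau_{\Phi_a}=\Delta(\Phi_a)\leq \Delta_0$, so $J_a(0)\leq \Delta_0\cdot c_a(0)$, which rearranges to the claim. For the inductive step, assume $c_a(t)\geq J_a(t)/\Delta_0$. If sensor $a$ does not observe anyone at round $t$, then $c_a(t+1)=c_a(t)$ and $J_a(t+1)=J_a(t)$, so the bound is trivially preserved. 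Otherwise, suppose $a$ observes $b$. By Theorem \ref{lem:fishernoise},
\[
J_a(t+1)\leq J_a(t)+\frac{1}{\frac{1}{J_b(t)}+\frac{1}{J_N}},
\]
and by Equation \ref{eq:gaim},
\[
c_a(t+1)-c_a(t)\geq \frac{1}{\Delta_0}\cdot\frac{1}{\frac{1}{J_b(t)}+\frac{1}{J_N}}.
\]
Combining these with the inductive hypothesis on $a$ gives
\[
c_a(t+1)\geq \frac{J_a(t)}{\Delta_0}+\frac{1}{\Delta_0}\cdot\frac{1}{\frac{1}{J_b(t)}+\frac{1}{J_N}}\geq \frac{J_a(t+1)}{\Delta_0},
\]
completing the induction.

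Notice that the inductive hypothesis is only invoked for sensor $a$ itself, not for the observed sensor $b$: this is convenient because Equation \ref{eq:gaim} is already phrased in terms of $J_b(t)$ rather than $c_b(t)$, so no comparison between $c_b$ and $J_b$ is needed in the step. Consequently, the argument is really a routine monotonicity/induction argument; the substantive content is entirely pushed into the Fisher Information recursion (Theorem \ref{lem:fishernoise}) and the algorithm-specific accuracy-gain inequality (Equation \ref{eq:gaim}). The only potential subtlety, and hence the main thing to double-check, is that Equation \ref{eq:gaim} indeed holds without requiring any relationship between $c_b(t)$ and $J_b(t)$; inspecting the weighted-average update rule for $c_a$ together with the definition of $\hat c_b(t)$ confirms that the gain $\hat c_b(t)=c_b(t)/(1+c_b(t)\var(N))$ can be compared directly to the Fisher Information convolution expression $1/(1/J_b(t)+1/J_N)$ via Lemma \ref{lem:reciprocal} and the Fisher-tightness bound applied to $b$'s current opinion distribution.
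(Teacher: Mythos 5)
Your overall induction scheme, base case, and the final combination step are fine and match the paper's structure. But the closing observation---that the inductive hypothesis is ``only invoked for sensor $a$ itself, not for the observed sensor $b$''---is wrong, and this is precisely where the proof has a gap. Equation \ref{eq:gaim} is not a free-standing fact about the update rule; in the paper its derivation \emph{is} an application of the induction hypothesis to sensor $b$. Concretely, one writes
$c_a(t+1)-c_a(t)=\hat{c}_b(t)=\dfrac{1}{1/c_b(t)+1/c_N}$ with $c_N=1/\var(N)$, and to lower-bound this by $\dfrac{1}{\Delta_0}\cdot\dfrac{1}{1/J_b(t)+1/J_N}$ one must replace $1/c_b(t)$ by $\Delta_0/J_b(t)$, i.e., one needs $c_b(t)\geq J_b(t)/\Delta_0$---exactly the IH for $b$---together with $\Delta_0\geq J_N/c_N$, which comes from the Fisher-tightness of the noise distribution (one of the $pdf$'s in $\cF$).

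Your proposed workaround in the last paragraph does not rescue this. Lemma \ref{lem:reciprocal} gives $c_b(t)=1/\var(X_b(t,\Conf))$, and combining that with Cram\'er--Rao only yields $c_b(t)\leq J_b(t)$, the \emph{wrong} direction. And ``the Fisher-tightness bound applied to $b$'s current opinion distribution'' is not available: $\Delta_0$ is defined as the maximum of $\Delta(\Phi)$ over the finite family $\cF$, and the distribution of $b$'s current opinion at round $t>0$ (a shifted mixture of convolutions) is not a member of $\cF$, so no tightness bound for it is at hand. So the comparison between $c_b(t)$ and $J_b(t)$ genuinely requires invoking the IH on $b$. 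Once you accept that, your argument becomes exactly the paper's: two applications of the IH (to $a$ and to $b$), the Fisher-tightness of $N$, and Theorem \ref{lem:fishernoise}.
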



\noindent Lemmas  \ref{for:fish}, \ref{lem:reciprocal} and \ref{lem:c_a(t)} can now be combined to yield:
$
\var(X_a(t,\CONF))\leq \Delta_0 \cdot \var(X_a(t,\OPT)).
$ This establishes Theorem \ref{thm:main}. \qed

\bigskip

Note that if $|F|=O(1)$ (i.e., $F$ contains a constant number of distributions, independent of the number of sensors) then  initial Fisher-tightness $\Delta_0$ is a constant, and hence Theorem \ref{thm:main} states that  $\Conf$ is constant-competitive at any sensor and at any time. We now aim to identify those cases where $\Conf$ performs even better. One such case is  when the distributions in $\cF$ as well as the noise distribution $N(\eta)$ are all Gaussians. In this case $\Delta_0=1$ and Theorem~\ref{thm:main} therefore states that the variance of $\Conf$ equals  that of~$\OPT$, for any sensor at at any time. Another case is when $|F|$ is a constant, the noise is Gaussian, but both the population size $n$ and  the round $t$ go to infinity.
In this case, analyzed in Appendix \ref{app:large-times}, as time increases, the performances of $\CONF$ become arbitrarily~close~to~those~of~$\OPT$.
\parskip0.1cm


\bigskip
\section{The Fisher Channel Capacity and convergence times}\label{section:flows}
For a fixed independent meeting pattern, the FI $J_a(t)$ at a sensor $a$ and round $t$ was defined in Section \ref{sub:relative} with respect to algorithm $\OPT$. We note that this definition applies  to any algorithm $A$ as long as it is sufficiently smooth so that the corresponding Fisher informations are well-defined.
This quantity $J_a(t,A)$ would respect the same recursive inequality as state in Theorem \ref{lem:fishernoise}, that is, we have:
$J_a(t+1,A)\leq J_a(t,A) + \frac{1}{\frac{1}{J_b(t,A)} + \frac{1}{J_{N}}}.$
This directly implies the following:
\begin{equation}\label{cor:Fisher-capacity}
J_a(t+1,A)- J_a(t,A)  \leq J_{N}~.
\end{equation}
The inequality above sets a bound of $J_{N}$ for the increase in {\em FI} per round.
In analogy to Channel Capacity as defined by Shannon \cite{Cover} we term this upper bound  as the {\em Fisher Channel Capacity}.

The restriction on information flow as given by the Fisher Channel Capacity can be translated into  lower~bounds for convergence time of algorithm $\OPT$ (and hence also apply for any algorithm).
%
Recall, $\rho$ is the first round when we have more than half of the population satisfying
 $\var(X_a(t))<\epsilon^2$. By Lemma \ref{for:fish}, a sensor, $a$, with variance  smaller than $\epsilon^2$ must have a large {\em FI}, specifically,   $J_a(\rho)\geq 1/\epsilon^2.$ 
To get some intuition on the convergence time,
assume  that the number of sensors is odd, and
 let $J_0$ denote the median initial {\em FI} of sensors (this is  the
median of the {\em FI}, $J_{\Phi_a}$, over all sensors $a$), and assume $J_0\ll 1/\epsilon^2$.
By definition, more than a  half of~the population have initial Fisher information at most $J_0$.
By the Pigeon-hole principle, at least one sensor  has an {\em FI} of, at most,~$J_0$  at $t=0$ and, at least, $1/\epsilon^2$
 at $t=\rho.$
Theorem \ref{eq:time} follows by the fact that, by Equation \ref{cor:Fisher-capacity}, this sensor could increase its {\em FI} by, at most, $J_{N}$ in each observation. 

\bigskip
\section{Conclusion}\label{sec:future}
We provide a fresh approach to the study of clock synchronization, following a purely theoretic distributed algorithmic type of study and employing techniques from information theory. 
 We have focused on arbitrary, yet independent, meeting patterns, and demanded the performances of each sensor to be as high as possible at any point in the execution. We have established lower bounds on the performances of algorithm $\OPT$, the best possible    clock synchronization algorithm operating  under the most liberal version of our model.
We have identified algorithm~$\Conf$, an extremely simple  algorithm whose performances are highly-competitive with those of $\OPT$. Moreover, under Gaussian conditions, the accuracies of sensors under $\Conf$ precisely equal those of~$\OPT$.

Algorithm $\Conf$ is based on storing and communicating a single {\em accuracy} parameter that complements noisy deviation measurements, and on internal computations and update rules that are based on {\em weighted-average} operations.
Our proofs rely on an extensive use of the concept of Fisher information. We use the Cram\'er-Rao bound and  our definition of a \emph{Fisher Channel Capacity} to quantify information flows and to obtain lower bounds on best possible performance. This opens the door for further rigorous quantifications of information flows within collaborative sensors.

Our information theoretic approach allowed us to tackle the clock synchronization problem in dynamic networks. 
In this initial work, we focus on independent meeting patterns which can be considered as representing short times scales in highly dynamic scenarios. As evident by this paper, studying independent meeting patterns is already rather complex. Hence, we leave the study of dependant patterns to future work. Our hope is that studying such extreme dynamic cases will help to provide tools and insights for future work dealing with other dynamic scenarios. 

This work is further relevant to the problem of collective approximation of environmental values by biological groups \cite{Biology}.

\bigskip

{\small 

}

\clearpage

\pagenumbering{roman}
\appendix

\renewcommand{\theequation}{A-\arabic{equation}}
\setcounter{equation}{0}
\begin{center}
\textbf{\large{APPENDIX}}
\end{center}

\section{Extending the Fisher inequality}\label{extendingSTAM}
The Fisher information inequality \cite{Stam1959} (see also \cite{Blachman,Rioul,Zamir1998}) applies for three one-variable distribution families  $r(z)$, $p_1(x_1)$, and $ p_2(x_2)$  parameterized by $\mu$ such that  $r$ is a convolution of $p_1$ and $p_2$, that is, $r(z) =  \int p_1(z-t) \cdot p_2(t) dt$. The theorem gives an upper bound of the Fisher information $J^{\mu}_r$ of the family $r(z- \mu)$ (with respect to $\mu$) based on the Fisher information $J^{\mu}_{p_1}$ and $J^{\mu}_{p_2}$ of the families $p_1(x_1- \mu)$, and $ p_2(x_2- \mu)$, respectively. Specifically, the theorem states that:
$(\alpha_1+\alpha_2)^2 J^{\mu}_{r} \leq \alpha_1^2{J^{\mu}_{p_1}} + \alpha_2^2{J^{\mu}_{p_2}}$, for any two real numbers $\alpha_1$ and $\alpha_2$. This in particular implies that  ${1}/{J^{\mu}_{r}} \geq {1}/{J^{\mu}_{p_1}} + {1}/{J^{\mu}_{p_2}}$.

The following lemma extends the Fisher information inequality to the case  where the distributions $p_1$ and $r$ are composed of multiple, not necessarily independent, variables, where the convolution with $p_2$ takes place over one of the variables of $p_1$.
\begin{lemma}\label{lem:STAM}
Let 
$\{p_1(x_1-\tau,\bar{x}_3)\}_{\tau\in\mathbb{R}}$ and ~$\{p_2(x_2-\tau)\}_{\tau\in\mathbb{R}}$ be two $pdf$ families with a translation parameter~$\tau$
such that  $x_1$ and $x_2$ are real variables,  $\bar{x}_3$ is a vector of multiple real valued variables
 and  $J^{\tau}_{p_1(x_1-\tau,\bar{x}_3)}$ and $J^{\tau}_{p_2(x_2- \tau)}$ are the corresponding Fisher information with respect to $\tau$. Let $r(z-\tau, \bar{x}_3)=  \int p_1(t-\tau,\bar{x}_3) \cdot p_2(z-t) dt$ be the convolution of $p_1$ and $p_2$. Then the Fisher information in the family
%
 $\{r(z-\tau, \bar{x}_3)\}_{\tau\in\mathbb{R}}$ with respect to $\tau$
satisfies:
$$
\frac{1}{J^{\tau}_{r(z-\tau,\bar{x}_3)}} \geq \frac{1}{J^{\tau}_{p_1(x_1-\tau,\bar{x}_3)}} + \frac{1}{J^{\tau}_{p_2(x_2- \tau)}}~.
$$
\end{lemma}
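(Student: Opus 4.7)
My plan is to adapt the score-projection proof of the classical Fisher information inequality (in the spirit of Zamir \cite{Zamir1998}) to accommodate the extra variable vector $\bar{x}_3$. Since $p_1$, $p_2$, and $r$ are all translation families in $\tau$ (with the translation acting only on the first scalar argument), I will work with the scores $\varphi_1(u_1,\bar{x}_3) := -\partial_{u_1}\log p_1(u_1,\bar{x}_3)$, $\varphi_2(n) := -\partial_n\log p_2(n)$ and $\varphi_r(w,\bar{x}_3) := -\partial_w\log r(w,\bar{x}_3)$; the Fisher informations with respect to $\tau$ are then the second moments of these scores under the appropriate joint densities. I will realize the setup probabilistically by letting $(U_1,\bar{X}_3)\sim p_1$ and, independently, $N\sim p_2$, and setting $W := U_1+N$, so that $(W,\bar{X}_3)$ has density $r$ (after absorbing the $\tau$-shift into $Z := W+\tau$).

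The heart of the argument is establishing the two identities
$$
\varphi_r(W,\bar{X}_3)\;=\;\mathbb{E}\bigl[\varphi_1(U_1,\bar{X}_3)\mid W,\bar{X}_3\bigr]\;=\;\mathbb{E}\bigl[\varphi_2(N)\mid W,\bar{X}_3\bigr].
$$
Each one falls out by direct calculation: the conditional density of $U_1$ given $(W=w,\bar{X}_3=\bar{x}_3)$ is $p_1(u_1,\bar{x}_3)\,p_2(w-u_1)/r(w,\bar{x}_3)$, and an integration by parts in $u_1$ (boundary terms vanish by the smoothness assumed on members of $\cF$) collapses either conditional expectation to $-\partial_w r(w,\bar{x}_3)/r(w,\bar{x}_3)=\varphi_r(w,\bar{x}_3)$. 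The point is that $\bar{x}_3$ is carried along as an inert parameter through every integral: the translation $\tau$ never acts on $\bar{x}_3$, and the possible dependence between $U_1$ and $\bar{X}_3$ under $p_1$ is harmless because we always condition on the full pair $(W,\bar{X}_3)$.

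Given these identities, for any $\alpha_1,\alpha_2\in\mathbb{R}$ with $\alpha_1+\alpha_2=1$, linearity of conditional expectation yields $\varphi_r(W,\bar{X}_3)=\mathbb{E}\bigl[\alpha_1\varphi_1(U_1,\bar{X}_3)+\alpha_2\varphi_2(N)\mid W,\bar{X}_3\bigr]$. Applying Jensen's inequality (conditional expectation is a contraction in $L^2$) then gives
$$
J^{\tau}_{r}\;\leq\;\alpha_1^2\,J^{\tau}_{p_1}\;+\;2\alpha_1\alpha_2\,\mathbb{E}\bigl[\varphi_1(U_1,\bar{X}_3)\,\varphi_2(N)\bigr]\;+\;\alpha_2^2\,J^{\tau}_{p_2}.
$$
The cross term vanishes because $N$ is independent of $(U_1,\bar{X}_3)$ and because $\mathbb{E}[\varphi_2(N)]=0$ is the universal zero-mean property of a score. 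Minimizing the resulting quadratic subject to $\alpha_1+\alpha_2=1$ (optimal $\alpha_i\propto J^{\tau}_{p_j}$, $j\neq i$) produces $J^{\tau}_{r}\leq J^{\tau}_{p_1}J^{\tau}_{p_2}/(J^{\tau}_{p_1}+J^{\tau}_{p_2})$, which is equivalent to the reciprocal inequality in the statement.

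I expect the main obstacle to be conceptual rather than computational: making sure the score-projection identities survive the presence of $\bar{X}_3$, and in particular that the dependence between $U_1$ and $\bar{X}_3$ inside $p_1$ does not sabotage either the integration by parts or the independence used to kill the cross term. The decisive observation that unblocks the argument is that $\tau$ shifts only the $X_1$-coordinate, so $\bar{X}_3$ plays the role of side information that may be conditioned on throughout, reducing the problem to the classical one-variable Fisher inequality on each fiber $\{\bar{X}_3=\bar{x}_3\}$.
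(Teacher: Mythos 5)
Your proof is correct, but it follows a genuinely different route from the paper's. You rebuild the Fisher information inequality from scratch in the multivariate setting: you introduce the probabilistic realization $W = U_1 + N$ with $N \perp (U_1,\bar{X}_3)$, prove the score-projection identities $\varphi_r(W,\bar{X}_3) = \mathbb{E}[\varphi_1 \mid W,\bar{X}_3] = \mathbb{E}[\varphi_2 \mid W,\bar{X}_3]$ by a direct integration by parts in the convolution variable, and then apply conditional Jensen with the cross term killed by independence and the zero-mean property of $\varphi_2$. This is the Zamir-style data-processing argument carried out in full. The paper instead conditions on $\bar{x}_3$ up front, writes $p_1(t-\tau,\bar{x}_3) = p_1(t-\tau \mid \bar{x}_3)\,p(\bar{x}_3)$, observes that for each fixed $\bar{x}_3$ the conditional density $r(\cdot \mid \bar{x}_3)$ is a genuine one-variable convolution, invokes Stam's original inequality as a black box on each fiber, and then multiplies by $p(\bar{x}_3)$ and integrates to recover the joint Fisher informations from the conditional ones. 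Both proofs hinge on the same structural observation you call out explicitly --- $\tau$ shifts only the first coordinate, so $\bar{x}_3$ is inert side information and the $U_1$--$\bar{X}_3$ dependence inside $p_1$ is harmless once you condition on $\bar{X}_3$. What your approach buys is self-containment and transparency about where independence enters (only between $N$ and $(U_1,\bar{X}_3)$, which is exactly the model assumption); what the paper's approach buys is brevity by outsourcing the core inequality to Stam, at the cost of a somewhat delicate verification that the $p(\bar{x}_3)$-weighted integral of conditional Fisher informations equals the joint Fisher information.
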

\medskip

\begin{proof}
We start by using the definition of $r$ as a convolution over $p_2$ and the first variable of $p_1$:
$$
r(z-\tau,\bar{x}_3)=\int p_1(t-\tau,\bar{x}_3) \cdot p_2(z-t)dt.
$$
 We can insert the density function  $p(\bar{x}_3)$ to rewrite the right hand side  as:
\begin{align*}
 \int p_1(t-\tau|\bar{x}_3)\cdot p(\bar{x}_3) \cdot p_2(z-t)dt \\
=p(\bar{x}_3) \int p_1(t-\tau|\bar{x}_3) \cdot p_2(z-t)dt. \\
\end{align*}
Implying that:
$$r(z-\tau|\bar{x}_3) = \int p_1(t-\tau|\bar{x}_3) \cdot p_2(z-t)dt. $$
We now define the distributions $R(z)=r(z-\tau|\bar{x}_3)$ and $P_1(t)=p_1(t-\tau|\bar{x}_3)$ so that the previous equation becomes:
$$R(z) = \int P_1(t) \cdot p_2(z-t)dt,$$
 for which we  apply the original Lemma  as first proved by Stam \cite{Stam1959} to deduce that for any two real numbers $\alpha_1$ and $\alpha_2$, we have:
$$
(\alpha_1+\alpha_2)^2 J^{\mu}_{R(z-\mu)} \leq \alpha_1^2 \cdot J^{\mu}_{P_1(x_1-\mu)}  +\alpha_2^2 \cdot J^{\mu}_{p_2(x_2-\mu)}.
$$
Note that $J^{\mu}_{P_1(x_1-\mu)}$ is well defined since, for a given $\bar{x}_3$, $P_1(x_1-\mu)$ is proportional to $p_1(x_1-\mu,\bar{x}_3)$  and the Fisher information integral of $p_1(x_1-\mu,\bar{x}_3)$ converges when integrating over all possible values of $\bar{x}_3$. This implies (see \cite{Rioul}) that the Fisher information in the convolution $R(z-\mu)$ is well defined and the equation above holds.
We now multiply both sides of the equation by $p(\bar{x}_3)$ and integrate over $\bar{x}_3$, to obtain:
\begin{equation}\label{eq:4}
(\alpha_1+\alpha_2)^2 \int J^{\mu}_{R(z-\mu)}~ p(\bar{x}_3)~ d\bar{x}_3 \leq \alpha_1^2 \int J^{\mu}_{P_1(x_1-\mu)}~ p(\bar{x}_3)~ d\bar{x}_3   +\alpha_2^2 \int J^{\mu}_{p_2(x_2)} ~ p(\bar{x}_3)~ d\bar{x}_3.
\end{equation}
Plugging in the definitions for  Fisher information and $R(z)$, the integral on the left hand side becomes:

\begin{align*}
\int J^{\mu}_{R(z-\mu)}~ p(\bar{x}_3) d\bar{x}_3 &=\int J^{\mu}_{r(z-\mu-\tau|\bar{x}_3)} p(\bar{x}_3)d\bar{x}_3 \\
 &= \int\int\frac{1}{r(z-\mu- \tau | \bar{x}_3)}\left(\frac{dr(z-\mu- \tau, | \bar{x}_3)}{d\mu}\right)^2~dz ~ p(\bar{x}_3) ~ d\bar{x}_3 \\
&=  \int \int \frac{1}{r(z-\mu-\tau|  \bar{x}_3) p(\bar{x}_3) }\left(\frac{d[r(z-\mu-\tau|  \bar{x}_3)p(\bar{x}_3)]}{d\mu}\right)^2 ~ dz  ~d\bar{x}_3 \\
&=  \int \int \frac{1}{r(z-\mu-\tau,\bar{x}_3 ) }\left(\frac{dr(z-\mu-\tau,\bar{x}_3 )}{d\mu}\right)^2 ~ dz ~ d\bar{x}_3 \\
&=  \int \int \frac{1}{r(z-\mu-\tau,\bar{x}_3 ) }\left(\frac{dr(z-\mu-\tau,\bar{x}_3 )}{d\tau}\right)^2 ~ dz ~ d\bar{x}_3 \\
&=  \int \int \frac{1}{r(\tilde{z}-\tau,\bar{x}_3 ) }\left(\frac{dr(\tilde{z}-\tau,\bar{x}_3 )}{d\tau}\right)^2 ~ d\tilde{z} ~ d\bar{x}_3 \\
&= J^{\tau}_{r(z-\tau,\bar{x}_3)}, \\
\end{align*}
where we used $\tilde{z}=z-\mu$ and the fact that $\bar{x}_3$ is independent of $\tau$.

Similarly, the integral over the first term on the right hand side of Equation \ref{eq:4}  gives $J^{\tau}_{p_1(x_1-\tau,\bar{x}_3)}$.
The last term is:
$$
\int J^{\mu}_{p_2(x_2-\mu)} p(\bar{x}_3)d\bar{x}_3 = J^{\mu}_{p_2(x_2-\mu)} \int  p(\bar{x}_3)d\bar{x}_3 = J^{\mu}_{p_2(x_2-\mu)}=  J^{\tau}_{p_2(x_2-\tau)},
$$
by normalization of the distribution $\bar{x}_3$.

Finally, Equation \ref{eq:4} translates to:
$$
(\alpha_1+\alpha_2)^2 J^{\tau}_{r(z-\tau,\bar{x}_3)} \leq \alpha_1^2 \cdot J^{\tau}_{p_1(x_1-\tau,\bar{x}_3)}  +\alpha_2^2 \cdot J^{\tau}_{p_2(x_2-\tau)},
$$
for any real $\alpha_1$ and $\alpha_2$. Setting $\alpha_1=J^{\tau}_{p_2(x_2-\tau)}$ and $\alpha_2=J^{\tau}_{p_1(x_1-\tau,\bar{x}_3)}$, we finally obtain:
$$
\frac{1}{J^{\tau}_{r(z-\tau,\bar{x}_3)}} \geq \frac{1}{J^{\tau}_{p_1(x_1-\tau,\bar{x}_3)}} + \frac{1}{J^{\tau}_{p_2(x_2-\tau)}},
$$
as desired.
\end{proof}
\bigskip

\section{Proof of Lemma \ref{lem:reciprocal}} \label{app:reciprocal}
 Fix an independent meeting pattern $\cP$ and a $pdf$ assignment, $\Phi_a\in \cF$, for each sensor $a$. Let us now prove the first part of the lemma, namely, that the opinion $X_a(t)$ at any sensor $a$ and round $t$ serves as an unbiased estimator for $\tau^*$.
The claim holds at time zero, and assume by induction that it holds at round $t$. Now consider the case that sensor $a$ observes another sensor $b$ at round $t$. The opinion of $u$ after the interaction, becomes:
\begin{equation}\label{EQ:bias}
x_a(t+1)=x_a(t)+\frac{{\tilde{d}_{ab}(t) \cdot \hat{c}_b}(t)} {{c_a(t)+\hat{c}_b(t)}}=\frac{x_a(t) c_a(t)+x_b(t) \hat{c}_b(t)}{c_a(t)+\hat{c}_b(t)}+ \frac{\eta \cdot \hat{c}_b(t)}{c_a(t)+\hat{c}_b(t)} .
\end{equation}
By induction, $X_a(t)$ and $X_b(t)$ are both unbiased estimators of $\tau^*$.
Recall now that the noise distribution~$N(\eta)$ is centered around zero.
Moreover, observe that at round $t$, the accuracy  at each sensor~$a$, namely $c_a(t)$, is deterministically defined (given the fixed  pattern of meetings, and the assignment of $pdf$'s to the sensors). In particular, at round $t$, the accuracy s $c_a(t)$, $c_b(t)$ as well as $\hat{c}_b(t)$ are all fixed constants.
Equation \ref{EQ:bias} therefore implies the following.
\begin{claim}\label{claim:unbiased}
At any round $t$ and for any sensor $a$, the opinion $x_a(t)$ serves as an unbiased estimator of  $\tau^*$.
\end{claim}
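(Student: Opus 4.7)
The proof plan is to establish both claims simultaneously by induction on the round $t$, with the base case following directly from the initialization $X_a(0)=\ell_a(0)\sim \Phi_a(x-\tau^*)$ and $c_a(0)=1/\var(\Phi_a)$, using the fact that $\Phi_a$ is centered at zero.

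For the inductive step, suppose that at round $t$ both properties hold for every sensor. Consider the case in which sensor $a$ observes sensor $b$ at round $t$ (if no observation occurs, both properties trivially persist). Starting from the update rule and substituting $\tilde{d}_{ab}(t)=X_b(t)-X_a(t)+\eta$, I would rewrite the new opinion in the convex-combination form
\[
X_a(t+1)=\frac{c_a(t)\,X_a(t)+\hat c_b(t)\,X_b(t)}{c_a(t)+\hat c_b(t)}+\frac{\hat c_b(t)\,\eta}{c_a(t)+\hat c_b(t)}.
\]
For the unbiasedness claim (part 1), since the accuracies $c_a(t),c_b(t),\hat c_b(t)$ are deterministic given the fixed meeting pattern and $pdf$ assignment, and since $\mean(X_a(t))=\mean(X_b(t))=\tau^*$ by induction while $\mean(\eta)=0$, linearity of expectation yields $\mean(X_a(t+1))=\tau^*$. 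This is essentially the argument already sketched in Appendix B.

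For the reciprocal-variance claim (part 2), the key is that the independent meeting pattern guarantees that $X_a(t)$, $X_b(t)$, and $\eta$ are mutually independent at the moment of interaction: indeed, their transitive histories are disjoint by definition of an independent pattern, and $\eta$ is fresh noise. Taking variances of the displayed equation and using the inductive hypotheses $\var(X_a(t))=1/c_a(t)$ and $\var(X_b(t))=1/c_b(t)$, I obtain
\[
\var(X_a(t+1))=\frac{c_a(t)+\hat c_b(t)^2\bigl(\tfrac{1}{c_b(t)}+\var(N)\bigr)}{(c_a(t)+\hat c_b(t))^2}.
\]
The main obstacle, and the whole reason for the particular definition of the ``reduced accuracy'' $\hat c_b(t)=c_b(t)/(1+c_b(t)\var(N))$, is the algebraic identity $\tfrac{1}{c_b(t)}+\var(N)=\tfrac{1}{\hat c_b(t)}$. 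Plugging this in collapses the bracketed factor to $\hat c_b(t)$, giving
\[
\var(X_a(t+1))=\frac{c_a(t)+\hat c_b(t)}{(c_a(t)+\hat c_b(t))^2}=\frac{1}{c_a(t)+\hat c_b(t)}=\frac{1}{c_a(t+1)},
\]
where the last equality uses the accuracy update rule of Algorithm $\Conf$. This closes the induction and establishes both parts of the lemma. The only subtle point to verify carefully is the independence of $X_b(t)$ from $X_a(t)$ and $\eta$ at round $t$, which is precisely the place where the independent-meeting-pattern assumption is invoked.
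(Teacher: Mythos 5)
Your proof of the unbiasedness claim matches the paper's argument in Appendix~B essentially line for line: the same induction on rounds, the same rewriting of the update into a convex combination plus a noise term, the same appeal to the accuracies being deterministic given the fixed meeting pattern and $pdf$ assignment, and linearity of expectation with $\mean(\eta)=0$. You also go on to verify the variance half of Lemma~\ref{lem:reciprocal}, which is likewise the paper's computation (and you correctly flag the independence of $X_a(t)$, $X_b(t)$, $\eta$ as the step that invokes the independent-meeting-pattern assumption, a point the paper leaves implicit).
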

Claim \ref{claim:unbiased} established the first part of the lemma. Let us now turn to prove the second part. This part of the lemma holds for time $t=0$ by definition of $c_a(0)$. Assume by induction that for any sensor $a$ at round $t$ it holds that $c_a(t) =1/\var(X_a(t))$ and consider time $t+1$.
We now consider an interaction between two sensors at round $t$, in which sensor $a$ observes sensor~$b$. The variance of the new opinion of $a$ is:
\begin{align*}
\var(X_a(t+1))&=\var\left(x_a(t)+ \frac{\tilde{d}_{ab} \hat{c}_b(t)}{c_a(t)+\hat{c}_b(t)}\right) = \var\left(\frac{x_a(t) c_a(t)+ \hat{c}_b(t)(x_b(t)+\eta)}{c_a(t)+\hat{c}_b(t)}\right) \\
&= \frac{c_a^2(t) \cdot \var(X_a(t))+\hat{c}_b^2(t) \cdot \var(X_b(t)+\eta)}{(c_a(t)+\hat{c}_b(t))^2} \\
&= \frac{c_a^2(t) \cdot \var(X_a(t))+\hat{c}_b^2(t) \cdot (\var(X_b(t))+\var(N(\eta)))}{(c_a(t)+\hat{c}_b(t))^2} \\
&= \frac{c_a^2(t) \cdot 1/c_a(t)+\hat{c}_b^2(t) \cdot (1/c_b(t)+\var(N(\eta)))}{(c_a(t)+\hat{c}_b(t))^2} = \frac{c_a^2(t) \cdot 1/c_a(t) +\hat{c}_b^2(t) \cdot 1/\hat{c}_b(t)}{(c_a(t)+\hat{c}_b(t))^2} \\
&= \frac{c_a(t)+\hat{c}_b(t)}{(c_a(t)+\hat{c}_b(t))^2} = \frac{1}{(c_a(t)+\hat{c}_b(t))} = \frac{1}{c_a(t+1)}~, \\
\end{align*}
which proves the induction step. This complete the proof of Lemma \ref{lem:reciprocal}.

\bigskip
\section{Proof of Lemma \ref{lem:c_a(t)}}\label{app:lem:c_a(t)}
By Lemma \ref{lem:reciprocal} and the definition of $\Delta_0$, we have $c_a(0)\geq {J_a(0)}/{\Delta_0}$, and hence
Lemma \ref{lem:c_a(t)} holds at time $0$.

Assume by induction that  the lemma holds at round $t$ and consider an interaction at round $t$ when sensor $a$ observes sensor $b$. Let $c_N=1/\var(N(\eta))$. By definition of algorithm $\Conf$, we have:
$$c_a(t+1)-c_a(t) = \hat{c}_b(t)=\frac{1}{1/c_b(t)+1/c_N}.$$
By the induction hypothesis applied on sensor $b$, we have:
$$ \frac{1}{1/c_b(t)+1/c_N}\geq \frac{1}{\frac{\Delta_0}{J_b(t)}+1/c_N}=\frac{1}{\Delta_0}\cdot \frac{1}{\frac{1}{J_b(t)}+\frac{1}{\Delta_0\cdot c_N}}. $$
Again, by definition of $\Delta_0$, we have  $\Delta_0\geq J_{N}/c_N$. Hence:
\begin{equation}\label{eq:gaim}
c_a(t+1)-c_a(t) \geq  \frac{1}{\Delta_0}\cdot \frac{1}{\frac{1}{J_b(t)}+\frac{1}{J_{N}}}.
\end{equation}
This means that the gain in accuracy  at sensor $a$ following an observation of sensor $b$ is up to a multiplicative factor of $\Delta_0$ at least as large the corresponding gain in FI of the sensor (operating under $\OPT$).

Finally, applying the induction hypothesis for sensor $a$ at round $t$, we have $c_a(t)\geq {J_a(t)}/{\Delta_0}$. Plugging this in Equation \ref{eq:gaim}, we obtain:

$$
 c_a(t+1)\geq   \frac{1}{\Delta_0}\cdot \left(J_a(t)+\frac{1}{\frac{1}{J_b(t)}+\frac{1}{J_{N}}}\right)\geq  J_a(t+1)/\Delta_0,
$$
where the second inequality holds by Theorem \ref{lem:fishernoise}. This
completes  the proof of the lemma.
\bigskip

\section{On the performances  of $\CONF$ at large times}\label{app:large-times}
We now investigate the performances of algorithm $\Conf$ at large times,  and show that as time increases, the variance of $\CONF$ becomes arbitrarily close to zero, and moreover, the  performances of $\CONF$ become closer and closer to those of $\OPT$.

The depth $D(\cP)$ of a given independent meeting pattern $\cP$  is defined as the largest round $t$ for which some sensor observes another sensor. For simplicity, we assume  synchronous meeting patterns in which at each round each sensor observes another sensor, but our results can be easily extended to the case where the  number of total interactions per sensor are all roughly the depth~$D(\cP)$. Note that for any  population with $n$ sensors, the depth of an  independent synchronous meeting pattern is at most $\log_2 n$. In particular, the depth is finite for populations of a fixed size. Since our goal is to investigate the behavior of $\CONF$ at large times, whenever we consider a round $t$, we only inspect populations and corresponding meeting patterns for which the depth is at least $t$. 

In the remaining of this section we fix a family of distributions $\cF$ and a  noise distribution $N(\eta)$. 
Given a round $t$, let $\var_{\sup}(t)$ denote the supremum of $\var(X_a(t,\Conf))$, taken over (1) all possible populations $A_n=\{a_1,a_2,\cdots, a_n\}$, for $n=1,2\cdots$, ~(2) all assignments of distributions  $\Phi_a\in \cF$ to the sensors in $A_n$,  (3) all meeting patterns (with depth at least $t$), and (4) all sensors $a_i\in A_n$. Our next claim implies that as time increases, the variance of $\CONF$ becomes arbitrarily close to zero.



\begin{claim}\label{lem:limit-var}
$\lim_{t \rightarrow \infty}  \var_{\sup}(t)=0$.
\end{claim}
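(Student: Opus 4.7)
My plan is to translate the claim into a statement about the accuracy parameter $c_a(t)$ and then exhibit an explicit linear-in-$t$ lower bound on it. By Lemma~\ref{lem:reciprocal}, the variance of any opinion produced by $\Conf$ equals $1/c_a(t)$, so it suffices to show that $c_a(t)\to\infty$ uniformly over the quantities involved in the supremum defining $\var_{\sup}(t)$.

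First, I would extract a positive constant from the finite family $\cF$ by setting $c_{\min}:=\min_{\Phi\in\cF}\,1/\var(\Phi)>0$. Since initial accuracies satisfy $c_a(0)=1/\var(\Phi_a)\geq c_{\min}$, and since the update rule $c_a(t+1)=c_a(t)+\hat{c}_b(t)$ is monotone non-decreasing (because $\hat{c}_b(t)\geq 0$), we obtain the uniform lower bound $c_a(t)\geq c_{\min}$ for every round and every sensor, regardless of the population or the meeting pattern.

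Next, I would exploit the fact that the map $c\mapsto c/(1+c\cdot\var(N))=1/(1/c+\var(N))$ is strictly increasing: applying it to the bound $c_b(t)\geq c_{\min}$ gives $\hat{c}_b(t)\geq \kappa$, where $\kappa:=1/(1/c_{\min}+\var(N))>0$ depends only on $\cF$ and $N(\eta)$. In a synchronous independent meeting pattern of depth at least $t$, every sensor performs exactly one observation in every round $1,2,\ldots,t$, so the additive increments of at least $\kappa$ accumulate to yield
\[
c_a(t)\;\geq\; c_{\min}+t\cdot\kappa.
\]

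Combining this with Lemma~\ref{lem:reciprocal} yields $\var(X_a(t,\Conf))\leq 1/(c_{\min}+t\kappa)$, a bound uniform across sensors, populations, $\cF$-assignments, and synchronous independent meeting patterns of depth at least $t$. Taking the supremum and letting $t\to\infty$ then forces $\var_{\sup}(t)\to 0$, as required. The only potential pitfall is the uniformity of the linear growth rate: one must verify that $\kappa$ is not degraded by varying the population or the pattern, which is precisely why the finiteness of $\cF$ (and hence the strict positivity of $c_{\min}$) is essential. The synchronous assumption in turn ensures that every sensor, not merely some sensor, has accumulated $t$ observations by round~$t$.
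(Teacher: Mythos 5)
Your proof is correct, and it follows essentially the same strategy as the paper: lower-bound the accuracy $c_a(t)$ uniformly over all sensors, populations, and patterns, then invoke Lemma~\ref{lem:reciprocal} to convert this into an upper bound on the variance. The difference is in how the per-round gain $\hat{c}_b(t)$ is bounded. The paper observes that $\hat{c}_b(t)\geq\tfrac{1}{2}\min\{c_b(t),1/\var(N)\}$ and concludes that the infimum accuracy $C_{\inf}(t)=1/\var_{\sup}(t)$ grows in each round by either a multiplicative factor of $3/2$ or an additive constant $1/(2\var(N))$, and hence diverges. You instead combine the monotonicity $c_b(t)\geq c_{\min}$ with the monotonicity of the map $c\mapsto 1/(1/c+\var(N))$ to get the uniform additive lower bound $\hat{c}_b(t)\geq\kappa=1/(1/c_{\min}+\var(N))$, yielding the explicit linear growth $c_a(t)\geq c_{\min}+t\kappa$ and hence the quantitative rate $\var_{\sup}(t)\leq 1/(c_{\min}+t\kappa)=O(1/t)$. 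Your version is slightly cleaner and more informative (it gives an explicit decay rate rather than a bare divergence argument), at the cost of using the global constant $c_{\min}$ rather than the running infimum; both hinge on the finiteness of $\cF$ in the same way.
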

\begin{proof}
For a round $t$, let $C_{\inf}(t)=1/\var_{\sup}(t)$. Note that $\var_{\sup}(0)$ is precisely the maximal variance over the distributions in $\cF$. Hence,  $C_{\inf}(0)$ is some positive constant (that depends on $\cF$ only).

 By the definition of $C_{\inf}(t)$ and by Lemma \ref{lem:reciprocal}, it follows that $C_{\inf}(t)$ is the infimum of $c_a(t)$, the accuracy   of a sensor $a$ at round $t$ operating under algorithm $\Conf$, taken over all possible populations, all assignments of distributions  $\Phi_a\in \cF$ to sensors, all meeting patterns, and all sensors $a$.

When sensor $a$ observes sensor $b$ at round $t$, the gain in accuracy  for sensor $a$ is:
$$
\frac{1}{1/c_b(t)+v_N}\geq \frac{1}{2}\cdot\min\{c_b(t),\frac{1}{\var(N(\eta))}\}.
$$
It follows that  $C_{\inf}(t)$ increases in a single round by either at least a multiplicative factor of $3/2$ or by at least an additive constant factor of $1/2\var(N(\eta))$.  This implies that $\lim_{t \rightarrow \infty}  C_{\inf}(t)=\infty$. The proof of the claim now follows by the definition of $C_{\inf}(t)$.
\end{proof}

Since algorithm $\OPT$ is superior over algorithm $\CONF$, the same limit property of the variance applies to algorithm $\OPT$
as well. We now claim that, in fact, if the noise $N(\eta)$ is Gaussian, then the  variances in $\CONF$ and $\OPT$ go to zero at roughly the same speed.

Given a round $t$, let $\kappa(t)$ denote the supremum  of the fraction ${\var(X_a(t,\CONF))}/{\var(X_a(t,\OPT))}$, taken over all possible populations $\{A_n\}_{n=1}^\infty$, all assignments of distributions  $\Phi_a\in \cF$ to sensors $a$ in $A_n$, all meeting patterns, and all sensors $a\in A_n$. 
Note that Theorem \ref{thm:main} implies that for any $t$, we have $\kappa(t)\leq \Delta_0$.

\begin{lemma}\label{conf-limit}
If the noise $N(\eta)$ is Gaussian then $\lim_{t \rightarrow \infty} \kappa(t)=1$.
\end{lemma}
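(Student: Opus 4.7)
\textbf{Proof plan for Lemma \ref{conf-limit}.}  The plan is to track the additive gap $\delta_a(t) := J_a(t) - c_a(t)$ (which is nonnegative, since $1/c_a(t) = \var(X_a(t,\Conf)) \geq \var(X_a(t,\OPT)) \geq 1/J_a(t)$), show that the Gaussian noise hypothesis forces $\delta_a(t)$ to stay uniformly bounded while $c_a(t)$ diverges, and then deduce $J_a(t)/c_a(t) \to 1$ uniformly.  Since the Cram\'er--Rao bound gives
\[
\kappa(t) \;\leq\; \sup_a \frac{\var(X_a(t,\Conf))}{1/J_a(t)} \;=\; \sup_a \frac{J_a(t)}{c_a(t)} \;=\; \sup_a \Bigl(1 + \frac{\delta_a(t)}{c_a(t)}\Bigr),
\]
and $\kappa(t) \geq 1$ always (since $\OPT$ is optimal), this will suffice.

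\textbf{Step 1: explicit increments in the Gaussian case.}  When $N(\eta)$ is Gaussian, $J_N = 1/\var(N(\eta)) =: c_N$, so the update rule of $\Conf$ and the recursive upper bound of Theorem~\ref{lem:fishernoise} take the symmetric forms $c_a(t+1) = c_a(t) + \frac{c_b(t)\, c_N}{c_b(t) + c_N}$ and $J_a(t+1) \leq J_a(t) + \frac{J_b(t)\, c_N}{J_b(t) + c_N}$.  Subtracting and using $J_b(t) \geq c_b(t)$ in the denominator, a short calculation yields
\[
\delta_a(t+1) - \delta_a(t) \;\leq\; \frac{c_N^2 \, \delta_b(t)}{(J_b(t) + c_N)(c_b(t) + c_N)} \;\leq\; \frac{c_N^2 \, \delta_b(t)}{(c_b(t) + c_N)^2}.
\]

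\textbf{Step 2: uniform bound on $\delta_a(t)$.}  Let $D(t) := \sup \delta_a(t)$ where the supremum ranges over all populations, meeting patterns and sensors.  At $t=0$, $\delta_a(0) = (\Delta(\Phi_a) - 1)/\var(\Phi_a)$, so $D(0)$ is a finite constant determined by the (finite) family $\cF$.  From Step~1 and the definition $C_{\inf}(t)$ of Claim~\ref{lem:limit-var},
\[
D(t+1) \;\leq\; D(t)\Bigl(1 + \tfrac{c_N^2}{(C_{\inf}(t) + c_N)^2}\Bigr).
\]
The proof of Claim~\ref{lem:limit-var} shows that $C_{\inf}(s)$ grows at least linearly in $s$ once it exceeds $c_N$, so $\sum_{s\geq 0} c_N^2/(C_{\inf}(s) + c_N)^2 < \infty$.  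The infinite product $\prod_{s\geq 0}\bigl(1 + c_N^2/(C_{\inf}(s) + c_N)^2\bigr)$ therefore converges to a finite constant $K$, and $D(t) \leq K\, D(0)$ for every $t$.

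\textbf{Step 3: conclusion.}  Combining the uniform bound $\delta_a(t) \leq K D(0)$ with $c_a(t) \geq C_{\inf}(t) \to \infty$ from Claim~\ref{lem:limit-var} gives
\[
1 \;\leq\; \kappa(t) \;\leq\; 1 + \frac{K\, D(0)}{C_{\inf}(t)} \;\xrightarrow[t\to\infty]{}\; 1,
\]
which establishes the lemma.

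\textbf{Main obstacle.}  The delicate step is Step~2: the gap $\delta_a$ could a priori grow every round via contributions from freshly-met sensors with their own nonzero $\delta_b$.  The argument works only because the Gaussian noise makes the updates for $c$ and the upper bound for $J$ structurally identical (both equal the harmonic combination with $c_N$), so the discrepancy at each step is a second-order term of size $c_N^2 \delta_b/(c_b + c_N)^2$ that becomes summable once accuracies diverge.  If the noise were non-Gaussian, the additive discrepancy $\tilde{J}_b - \hat{c}_b$ would contain a first-order piece proportional to $c_N(1 - J_N\var(N))$ and the argument would fail, which is why the Gaussian hypothesis on $N(\eta)$ is essential.
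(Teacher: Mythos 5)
Your proof is correct, and it takes a genuinely different route from the paper's. The paper's argument compares the per-round increments of $c_a$ and $J_a$ \emph{multiplicatively}: once $C_{\inf}(t) > x J_N$, the authors argue that both $\Delta J(t,\Conf) = c_a(t+1)-c_a(t)$ and $\Delta J(t,\opt) = J_a(t+1)-J_a(t)$ lie in $[\tfrac{1}{1+1/x}J_N,\,J_N]$, and that $x\to\infty$ forces $\kappa(t)\to 1$. You instead track the \emph{additive} gap $\delta_a(t) = J_a(t)-c_a(t)$, show via the harmonic identity
\[
\frac{J_b c_N}{J_b+c_N} - \frac{c_b c_N}{c_b+c_N} = \frac{c_N^2\,\delta_b}{(J_b+c_N)(c_b+c_N)}
\]
that each round contributes only a second-order, summable increment to $\delta_a$, and deduce a uniform bound $D(t)\le K D(0)$ from the convergence of $\prod_s\bigl(1+c_N^2/(C_{\inf}(s)+c_N)^2\bigr)$. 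Your version buys something real: the paper's claimed lower bound $\Delta J(t,\opt)\ge \tfrac{1}{1+1/x}J_N$ does not actually follow from Theorem~\ref{lem:fishernoise} (which only gives an upper bound on $\tilde J_b(t)$), whereas your argument uses only the upper bound on the $J$-increment together with the exact $c$-recursion and $J_b\ge c_b$, and hence closes that gap. You also make explicit why the sum of discrepancies is finite (linear growth of $C_{\inf}$ once it exceeds $c_N$), which the paper leaves implicit when jumping from ``increments are close'' to ``the ratio of totals tends to one.'' In short: same structural insight (Gaussian noise makes the $c$- and $J$-recursions coincide to leading order at large accuracy), but a more careful bookkeeping that yields a fully rigorous and arguably cleaner proof.
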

\begin{proof}
Since the noise is Gaussian we have $\var(N(\eta))=1/J_{N}$. Recall the definition of $C_{\inf}(t)$ from the proof of Claim \ref{lem:limit-var}. Note now that as $C_{\inf}(t)$ becomes larger and larger the gain in accuracy  under algorithm $\Conf$ becomes very close to $J_{N}$. Indeed, when sensor $a$ observes sensor $b$ at round $t$, we have: $$c_a(t+1)-c_a(t) =\frac{1}{1/c_b(t)+1/J_{N}}.$$
Specifically, consider now the case that  $c_b(t)>x\cdot J_{N}$, for some large $x$. Here, the increase in accuracy  at $a$ is some quantity $\Delta J(t,\conf)$, satisfying $$\frac{1}{1+1/x} J_{N}\leq \Delta J(t,\conf)\leq J_{N}.$$
The Cram\'er-Rao bound and Lemma \ref{lem:reciprocal} imply that $J_b(t,\OPT)\geq c_b(t)$, and hence,
 $J_b(t,\OPT)>x\cdot J_{N}$. This, together with Theorem \ref{lem:fishernoise},  implies that at round $t$, the increase $\Delta J(t,\opt)$ in Fisher information of $a$ under algorithm $\OPT$  is
some quantity satisfying $$\frac{1}{1+1/x} J_{N}\leq \Delta J(t,\opt) \leq J_{N}.$$
 Hence $\Delta J(t,\conf)$ and $\Delta J(t,\opt)$ are the same quantity up to a multiplicative factor of $\frac{1}{1+1/x}$. Finally, 
 since $\lim_{t \rightarrow \infty}  C_{\inf}(t)=\infty$ (see the proof of Claim \ref{lem:limit-var}), it follows that $x$ goes to infinity as $t$ goes to infinity. We thus get $\lim_{t \rightarrow \infty} \kappa(t)=1$, which establishes the proof of the lemma.
\end{proof}

\end{document}